\numberwithin{figure}{section}%
\numberwithin{table}{section}%
\numberwithin{equation}{section}%
\providecommand{\IntRange}[1]{\left\llbracket #1 \right\rrbracket}
\newcommand{\IY}[2]{\IntRange{ #1: #2 }}%
\newcommand{\SumX}[1]{\bm{\sum}_{\leq u} \pth {#1}}
\newcommand{\SSumX}[1]{\bm{\sum} \pth {#1}}
\newcommand{\SumT}[1]{\bm{\sum}^{\leq \alpha} \left[{#1} \right]}
\newcommand{\SumSetX}[1]{\Sigma_{#1}}%
\newcommand{\cardin}[1]{\left| {#1} \right|}
\newcommand{\U}[1]{U({#1})}
\newcommand{\seg}[2]{{#1}{\IntRange{{#2}}}}
\newcommand{\segment}{segment}
\newcommand{\segments}{segments}
\newcommand{\length}{length}
\newcommand{\divides}{\mid}
\newcommand{\notdivides}{\nmid}
\newcommand{\polylog}{\operatorname{polylog}}
\definecolor{blue25}{rgb}{0, 0, 11}
\definecolor{default1}{rgb}{0.45,0.0,0.0}
\newcommand{\emphic}[2]{%
   \textcolor{default1}{%
      \textbf{\emph{#1}}}%
   \index{#2}}
\newcommand{\emphi}[1]{\emphic{#1}{#1}}
\newcommand{\pth}[2][\!]{\mleft({#2}\mright)}%
\newcommand{\FFT}{\textsf{F{F}T}\xspace}%
\newcommand{\mY}[2]{\mathbb{1}_{#1}\pth{#2}}
\newcommand{\setOf}[1]{\mathrm{set}\pth{#1}}%
\newcommand{\cardX}[1]{\mathrm{card}\pth{#1}}%
\newcommand{\Set}[2]{\left\{ #1 \;\middle\vert\; #2 \right\}}
\newcommand{\brc}[1]{\left\{ {#1} \right\}}
\renewcommand{\th}{th\xspace}
\newcommand{\myqedsymbol}{\rule{2mm}{2mm}}
\newcommand{\floor}[1]{\left\lfloor {#1} \right\rfloor}
\newcommand{\HLinkShort}[2]{\hyperref[#2]{#1\ref*{#2}}}
\newcommand{\HLink}[2]{\hyperref[#2]{#1~\ref*{#2}}}
\newcommand{\HLinkPage}[2]{\hyperref[#2]{#1~\ref*{#2}%
      $_\text{p\pageref{#2}}$}}
\newcommand{\HLinkPageOnly}[1]{\hyperref[#1]{Page~\refpage*{#1}%
      $_\text{p\pageref{#1}}$}}
\newcommand{\HLinkSuffix}[3]{\hyperref[#2]{#1\ref*{#2}{#3}}}
\newcommand{\HLinkPageSuffix}[3]{\hyperref[#2]{#1\ref*{#2}%
      #3$_\text{p\pageref{#2}}$}}
\newcommand{\tablab}[1]{\label{table:#1}}
\newcommand{\tabref}[1]{\HLink{Table}{table:#1}}%
\newcommand{\lemlab}[1]{\label{lemma:#1}}
\newcommand{\lemref}[1]{\HLink{Lemma}{lemma:#1}}%
\newcommand{\obslab}[1]{\label{observation:#1}}
\newcommand{\obsref}[1]{\HLink{Observation}{observation:#1}}%
\newcommand{\thmlab}[1]{\label{theorem:#1}}
\newcommand{\thmref}[1]{\HLink{Theorem}{theorem:#1}}%
\newcommand{\N}{\mathbb{N}}
\newcommand{\Z}{\mathbb{Z}}
\newcommand{\R}{\mathbb{R}}
\theoremstyle{plain}%
\newtheorem{theorem}{Theorem}[section]
\newtheorem{lemma}[theorem]{Lemma}
\newtheorem{corollary}[theorem]{Corollary}
\newtheorem{observation}[theorem]{Observation}
\theoremstyle{plain}%
\newtheorem*{remark:unnumbered}[theorem]{Remark}%
\newtheorem{remark}[theorem]{Remark}%
\theoremstyle{nonumberplain}%
\newtheorem{proof}{Proof:}%
\title{\Large A Faster Pseudopolynomial Time Algorithm for Subset Sum}
\author{
    Konstantinos Koiliaris\thanks{Department of Computer Science, University of Illinois, Urbana - Champaign. E-mail: 
    \texttt{koiliar2@illinois.edu}}
    \qquad\quad
    Chao Xu\thanks{Department of Computer Science, University of Illinois, Urbana - Champaign. E-mail: \texttt{chaoxu3@illinois.edu}}
    \date{}\\
}
\begin{document}

\maketitle
\begin{abstract}\small
Given a multiset $S$ of $n$ positive integers and a target integer $t$, the subset sum problem is to decide if there is a subset of $S$ that sums up to $t$. We present a new divide-and-conquer algorithm that computes \emph{all} the realizable subset sums up to an integer $u$ in
 $\widetilde{O}(\min\{\sqrt{n}u,u^{4/3},\sigma\})$,
 where $\sigma$ is the sum of all elements in $S$ and $\widetilde{O}$ hides polylogarithmic factors. This result improves upon the standard dynamic programming algorithm that runs in $O(nu)$ time. To the best of our knowledge, the new algorithm is the fastest general deterministic algorithm for this problem. We also present a modified algorithm for finite cyclic groups, which computes all the realizable subset sums within the group in
  $\widetilde{O}(\min\{\sqrt{n}m,m^{5/4}\})$ 
 time, where $m$ is the order of the group. \\

\end{abstract}
\section{Introduction}
Given a multiset $S$ of $n$ positive integers and an integer target value $t$, the \emph{subset sum problem} is to decide if there is a subset of $S$ that sums to $t$. The subset sum problem is related to the knapsack problem \cite{Dantzig1957} and it is one of Karp's original NP-complete problems \cite{Karp1972}. The subset sum is a fundamental problem used as a standard example of a problem that can be solved in weakly polynomial time in many undergraduate algorithms/complexity classes. As a weakly NP-complete problem, there is a standard pseudopolynomial time algorithm using a dynamic programming, due to Bellman, that solves it in $O(nt)$ time \cite{Bellman1956} (see also \cite[Chapter 34.5]{cormen2014introduction}). The current state of the art has since been improved by a $\log t$ factor using a bit-packing technique \cite{Pisinger2003}. There is extensive work on the subset sum problem, see \tabref{tab:results} for a summary of previous deterministic pseudopolynomial time results \cite{Bellman1956,Pisinger19991,Faaland1973,Pferschy,NET:NET5,Pisinger2003,Lokshtanov2010,Serang2014,Serang2015}. 

\begin{table*}[t]
\def\arraystretch{1.5}
\centering
\begin{tabular}{ m{2.5cm} l l m{5.5cm}}
    \hline
    \textbf{Result} & \textbf{Time} & \textbf{Space} & \textbf{Comments} \\
    \hline\hline
    Bellman\:\cite{Bellman1956} & $O(nt)$ & $O(t)$ & original DP solution \\
    \hline
    Pisinger\:\cite{Pisinger2003} & $O\!\left(\frac{nt}{\log t}\right)$ & $O\!\left(\frac{t}{\log t}\right)$ & RAM model implementation of Bellman\\
    \hline
    Pisinger\:\cite{Pisinger19991} & $O(n\max{S})$ & $O(t)$ & fast if small $\max S$ \\
    \hline
    Faaland\:\cite{Faaland1973},\qquad Pferschy\:\cite{Pferschy} & $O(n't)$ & $O(t)$ & fast for small $n'$ \\
    \hline
    Klinz et al. \cite{NET:NET5} & $O(\sigma^{3/2})$ & $O(t)$ & fast for small $\sigma$, obtainable from above because $n'=O\left(\sqrt{\sigma}\right)$ \\
    \hline
    Eppstein\:\cite{Eppstein1997375}, Serang\:\cite{Serang2014,Serang2015} & $\widetilde{O}\!\left(n\max{S}\right)$ & $O(t\log t)$ & data structure \\
    \hline
    Lokshtanov et al. \cite{Lokshtanov2010} & $\widetilde{O}(n^3t)$ & $\widetilde{O}(n^2)$ & polynomial space \\
    \hline
    \textbf{current work} & \begin{minipage}[t][1.1cm][t]{4cm} $\widetilde{O}\!\left(\min\left\{\sqrt{n'}\,t,t^{4/3},\sigma\right\}\right)$ \thmref{distincttheorem} \end{minipage} & $O(t)$ & see Section \ref{contrib} \\
    \hline
\end{tabular}
\caption{\small Summary of deterministic pseudopolynomial time results on the subset sum problem. The input is a target number $t$ and a multiset $S$ of $n$ numbers, with $n'$ distinct values up to $t$, and $\sigma$ denotes the sum of all elements in $S$.}
\tablab{tab:results}
\end{table*}

Moreover, there are results on subset sum that depend on properties of the input, as well as data structures that maintain subset sums under standard operations. In particular, when the maximum value of any integer in $S$ is relatively small compared to the number of elements $n$, and the target value $t$ lies close to one-half the total sum of the elements, then one can solve the subset sum problem in almost linear time \cite{Galil}. This was improved by Chaimovich \cite{Chaimovich1999}. Furthermore, Eppstein described a data structure which efficiently maintains all subset sums up to a given value $u$, under insertion and deletion of elements, in $O(u\log u \log n)$ time per update, which can be accelerated to $O(u\log u)$ when additional information about future updates is known \cite{Eppstein1997375}. The probabilistic convolution tree, by Serang \cite{Serang2014,Serang2015}, is also able to solve the subset sum problem in $\widetilde{O}(n\max(S))$ time, where $\widetilde{O}$ hides polylogarithmic factors.

If randomization is allowed, more algorithms are possible. In particular, Bringmann showed a randomized algorithm that solves the problem in $\widetilde{O}(nt)$ time, using only $\widetilde{O}(n\log t)$ space under the Extended Riemann Hypothesis \cite{Bringmann16}. Bringmann also provided a randomized near linear time algorithm $\widetilde{O}(n+t)$ -- it remains open whether this algorithm can be derandomized.

Finally, it is unlikely that any subset sum algorithm runs in time $O(t^{1-\epsilon}\,n^c)$, for any constant $c$ and $\epsilon>0$, as such an algorithm would imply that there are faster algorithms for a wide variety of problems including set cover \cite{Bringmann16,Cygan:2012:PHC:2353734.2354303}. 

%
\subsection{Applications of the subset sum problem.}
The subset sum problem has a variety of applications including: power indices \cite{Uno2012}, scheduling \cite{NAV:NAV20202,4358069,gueret1999}, set-based queries in databases \cite{TranCW11}, breaking precise query protocols \cite{Dautrich:2013:CPP:2452376.2452397} and various other graph problems with cardinality constraints \cite{JGT:JGT3,Diaz2006,Cai2006,guruswami,NET:NET5,Eppstein1997375} (for a survey of further applications see \cite{kellerer2004knapsack}).

A faster pseudopolynomial time algorithm for the subset sum would imply faster \emph{polynomial time} algorithms for a number of problems. The \emph{bottleneck graph partition} problem on weighted graphs is one such example. It asks to split the vertices of the graph into two equal-sized sets such that the value of the bottleneck (maximum-weight) edge, over all edges across the cut, is minimized. The impact of our results on this problem and other selected applications is highlighted in Section \ref{applications}. 

\subsection{Our contributions.}
\label{contrib}
The new results are summarized in \tabref{tab:contribution} --  we consider the following \emph{all subset sums} problem: Given a multiset $S$ of $n$ elements, with $n'$ distinct values, with $\sigma$ being the total sum of its elements, compute \emph{all} the realizable subset sums up to a prespecified integer $u$. Computing all subset sums for some $u \geq t$ also answers the standard subset sum problem with target value $t$.

\begin{table}[t]
\def\arraystretch{1.5}
\centering
\begin{tabular}{ c c c }
  \hline
  \textbf{Parameters} & \textbf{Previous best} & \textbf{Current work} \\
  \hline\hline
    $n$ and $t$ & $O(nt/\log t)$ & $\widetilde{O}\!\left(\min\left\{ \sqrt{n}t,\:t^{4/3} \right\} \right)$\\
  \hline
    $n'$ and $t$ & $O(n't)$ & $\widetilde{O}\!\left(\min\left\{ \sqrt{n'}t,\:t^{4/3} \right\}\right)$\\
  \hline
    $\sigma$ & $O(\sigma^{3/2})$ & $\widetilde{O}(\sigma)$\\
  \hline
\end{tabular}
\caption{\small Our contribution on the subset sum problem compared to the previous best known results. The input $S$ is a multiset of $n$ numbers with $n'$ distinct values, $\sigma$ denotes the sum of all elements in $S$ and $t$ is the target number.} 
\tablab{tab:contribution}
\end{table}

Our main contribution is a new algorithm for computing the all subset sums problem in $\widetilde{O}\!\left( \min\{\sqrt{n}u, u^{4/3}, \sigma\} \right)$ time. The new algorithm improves over all previous work (see \tabref{tab:contribution}). To the best of our knowledge, it is the fastest general deterministic pseudopolynomial time algorithm for the all subset sum problem, and consequently, for the subset sum problem.

Our second contribution is an algorithm that solves the \emph{all subset sums} problem modulo $m$, in $O\!\left(\min \{\sqrt{n}m,m^{5/4}\} \log^2 m\right)$ time. Though the time bound is superficially similar to the first algorithm, this algorithm uses a significantly different approach.

Both algorithms can be augmented to return the solution; i.e., the subset summing up to each number, with a polylogarithmic slowdown (see Section \ref{sec:recover} for details).


%
\subsection{Sketch of techniques.}
The straightforward divide-and-conquer algorithm for solving the subset sum problem \cite{Horowitz:1974:CPA:321812.321823}, partitions the set of numbers into two sets, recursively computes their subset sums and combines them together using \FFT  \cite{Eppstein1997375,Serang2014,Serang2015} (Fast Fourier Transform \cite[Chapter 30]{cormen2014introduction}). This algorithm has a running time of $O(\sigma \log \sigma \log n)$. 



\paragraph{Sketch of the first algorithm (on integers).} Our main new idea is to improve the ``conquer'' step by taking advantage of the structure of the sets. In particular, if $S$ and $T$ lie in a short interval, then one can combine their subset sums quickly, due to their special structure. On the other hand, if $S$ and $T$ lie in a long interval, but the smallest number of the interval is large, then one can combine their subset sums quickly by ignoring most of the sums that exceed the upper bound.

The new algorithm works by first partitioning the interval $\IY{0}{u}$ into a logarithmic number of exponentially long intervals. Then computes these partial sums recursively and combines them together by aggressively deploying the above observation.

\paragraph{Sketch of the second algorithm (modulo $m$).} Assume $m$ is a prime number. Using known results from number theory, we show that for any $\ell$ one can partition the input set into $\widetilde{O}(|S|/\ell)$ subsets, such that every such subset is contained in an arithmetic progression of the form $x, 2x, \ldots, \ell x$. The subset sums for such a set can be quickly computed by dividing and later multiplying the numbers by $\ell$. Then combine all these subset sums to get the result.

Sadly, $m$ is not always prime. Fortunately, all the numbers that are relative prime to $m$ can be handled in the same way as above. For the remaining numbers we use a recursive partition classifying each number, in a sieve-like process, according to which prime factors it shares with $m$. In the resulting subproblems all the numbers are coprime to the moduli used, and as such the above algorithm can be used. Finally, the algorithm combines the subset sums of the subproblems.

\paragraph{Paper organization.} Section \ref{sec:section2} covers the algorithm for positive integers. Section \ref{sec:cycl} describes the algorithm for the case of modulo $m$. 
Section \ref{sec:recover} shows how we can recover the subsets summing to each set, and Section \ref{applications} presents the impact of the results on selected applications of the problem. 

\section{The algorithm for integers}
\label{sec:section2}

\subsection{Notations.}

Let $\IY{x}{y} = \brc{x,x+1, \ldots, y}$ denote the set of integers in the interval $[x,y]$. Similarly, $\IntRange{x} = \IY{1}{x}$. For two sets $X$ and $Y$, we denote by $X \oplus Y$ the set $\Set{x+y}{x \in X \text{ and }y \in Y}$. If $X$ and $Y$ are sets of points in the plane, $X\oplus Y$ is the set $\{(x_1+y_1, x_2 + y_2) \:|\: x_1, x_2 \in X \text{ and } y_1, y_2 \in Y\}$.

For an element $s$ in a multiset $S$, its \emphi{multiplicity} in $S$ is denoted by $\mY{S}{s}$. We denote by $\setOf{S}$ the set of \emph{distinct elements} appearing in the multiset $S$. The \emphi{size} of a multiset $S$ is the number of distinct elements in $S$ (i.e., $\cardin{\setOf{S}}$). The \emphi{cardinality} of $S$, is $\cardX{S} = \sum_{s \in S} \mY{S}{s}$. We denote that a multiset $S$ has all its elements in the interval $\IY{x}{y}$ by $S \subseteq \IY{x}{y}$.

For a multiset $S$ of integers, let $\SumSetX{S} = \sum_{s \in S} \mY{S}{s}\cdot s$ denote the total \emphi{sum} of the elements of $S$. The set of \emphi{all subset sums} is denoted by \[\SSumX{S} = \Set{\SumSetX{T}}{T \subseteq S}\:.\] The pair of the set of all subset sums using sets of size at most $\alpha$ along with their associated cardinality is denoted by $\SumT{S} = \left\{ \bigl(\SumSetX{T},\,|T|\bigr) \:\middle|\: T\subseteq S,\, |T|\leq \alpha \right\}$. The set of all subset sums of a set $S$ up to a number $u$ is denoted by $\SumX{S} = \SSumX{S} \cap \IY{0}{u}$.

\subsection{From multisets to sets.}
\label{frommultisetstosets}

Here, we show that the case where the input is a multiset can be reduced to the case of a set. The reduction idea is somewhat standard (see \cite[Section 7.1.1]{kellerer2004knapsack}), and first appeared in \cite{Lawler1979}. We present it here for completeness.

\begin{lemma}
Given a multiset $S$ of integers, and a number $s \in S$, with $\mY{S}{s} \geq 3$. Consider the multiset $S'$ resulting from removing two copies of $s$ from $S$, and adding the number $2s$ to it. Then, $\SumX{S}= \SumX{S'}$.  Observe that $\cardX{S'} = \cardX{S} -1$.
\end{lemma}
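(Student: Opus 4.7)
The plan is to establish the stronger equality $\SSumX{S} = \SSumX{S'}$, from which $\SumX{S} = \SumX{S'}$ follows immediately by intersecting both sides with $\IY{0}{u}$. I will prove the set equality by showing that any subset sum of one multiset can be realized as a subset sum of the other via a local swap between $s+s$ and $2s$. Throughout, I will track only the multiplicities of $s$ and $2s$, since every other element appears with the same multiplicity in $S$ and $S'$.

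For the inclusion $\SSumX{S} \subseteq \SSumX{S'}$, I will take any $T \subseteq S$ and let $a = \mY{T}{s}$ and $b = \mY{T}{2s}$. If $a \leq \mY{S}{s}-2 = \mY{S'}{s}$, then $T$ is already a sub-multiset of $S'$ and we are done. Otherwise $a \in \{\mY{S}{s}-1,\mY{S}{s}\}$, and since $\mY{S}{s}\geq 3$ we have $a \geq 2$. Then the multiset $T'$ obtained from $T$ by setting $\mY{T'}{s}=a-2$ and $\mY{T'}{2s}=b+1$ has the same total sum, and satisfies $\mY{T'}{2s}\leq \mY{S}{2s}+1 = \mY{S'}{2s}$ as well as $\mY{T'}{s} \leq \mY{S'}{s}$, so $T' \subseteq S'$.

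For the reverse inclusion $\SSumX{S'} \subseteq \SSumX{S}$, I will take any $T' \subseteq S'$ and do the symmetric swap: replace one copy of $2s$ (if needed) by two copies of $s$. Setting $a = \mY{T'}{s}$ and $b = \mY{T'}{2s}$, if $b \leq \mY{S}{2s}$ then $T'$ is already a sub-multiset of $S$. Otherwise $b = \mY{S}{2s}+1$, and I take $T$ with $\mY{T}{s}=a+2$ and $\mY{T}{2s}=b-1=\mY{S}{2s}$; the bound $a \leq \mY{S'}{s} = \mY{S}{s}-2$ gives $\mY{T}{s} \leq \mY{S}{s}$, so $T \subseteq S$ and $\SumSetX{T}=\SumSetX{T'}$.

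The only delicate point, which is easily checked, is that the precondition $\mY{S}{s}\geq 3$ is used exactly once, to guarantee $a \geq 2$ in the case $a = \mY{S}{s}-1$ of the first inclusion so that subtracting two copies of $s$ remains valid; I expect this to be the only place where one must pause. The cardinality observation $\cardX{S'}=\cardX{S}-1$ is then immediate from the construction of $S'$: two occurrences are removed and one is added.
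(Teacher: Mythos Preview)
Your proof is correct and uses the same idea as the paper: trade two copies of $s$ for one copy of $2s$ (and vice versa) to move between sub-multisets of $S$ and $S'$ without changing the sum. The paper's proof is terser and only writes out the direction $\SSumX{S}\subseteq\SSumX{S'}$ explicitly; your version is more careful in that it handles both inclusions, checks the multiplicity bounds case by case, and pinpoints where the hypothesis $\mY{S}{s}\geq 3$ is actually needed.
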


\begin{proof}
    Consider any multiset $T \subseteq S$. If $T$ contains two or more
    copies of $s$, then replace two copies by a single
    copy of $2s$. The resulting subset is $T' \subseteq S'$, and
    $\SumSetX{T} = \SumSetX{T'}$, establishing the claim.
\end{proof}

\begin{lemma}
    \lemlab{sparsify}%
    Given a multiset $S$ of integers in $\IntRange{u}$ of cardinality
    $n$ with $n'$ unique values, one can compute, in $O(n' \log^2 u)$ time, a multiset $T$, such
    that:
    \begin{inparaenum}[(i)]
        \item $\SumX{S}= \SumX{T}$,\\
        \item $\cardX{T} \leq \cardX{S}$, \\
        \item $\cardX{T} = O(n'\log u)$, and \\
        \item no element in $T$ has multiplicity exceeding two.
    \end{inparaenum}
\end{lemma}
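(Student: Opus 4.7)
The plan is to apply the previous lemma repeatedly, sweeping through the distinct values of $S$ in ascending order until no multiplicity exceeds two. I would store the current multiset as a balanced binary search tree (BST) of $(\text{value},\text{multiplicity})$ pairs keyed by value, and walk it in ascending order: at each key $v$ with current multiplicity $m \geq 3$, pick $c \in \{1,2\}$ with $c \equiv m \pmod{2}$, set $v$'s multiplicity to $c$, and add $(m-c)/2$ to the multiplicity of $2v$, inserting a new node for $2v$ if needed. If $2v > u$, simply discard these would-be copies: in that regime every valid subset uses at most one copy of $v$, so discarding them cannot change $\SumX{\cdot}$.

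Each batch is a composition of $(m-c)/2$ applications of the previous lemma (plus the boundary case just noted), so $\SumX{\cdot}$ is preserved and cardinality strictly decreases, immediately giving properties (i), (ii), and termination. Upon termination every multiplicity is at most two, giving (iv). Moreover, because we process keys in ascending order and a reduction at $v$ only touches $v$ and $2v > v$, each key is visited at most once.

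For (iii), I would observe that every value ever placed in the BST has the form $2^i v$ for some original distinct value $v \in \setOf{S}$ and some integer $i \geq 0$ with $2^i v \leq u$, since new values are introduced only by doubling and values exceeding $u$ are discarded. Hence the total number of distinct values ever present is bounded by $\sum_{v \in \setOf{S}} \left( \floor{\log_2(u/v)} + 1 \right) = O(n' \log u)$, and since each multiplicity in $T$ is at most two, $\cardX{T} = O(n' \log u)$. The BST therefore holds $O(n' \log u)$ keys, so each of its operations costs $O(\log u)$; with $O(n' \log u)$ batches each doing $O(1)$ BST operations, the total running time is $O(n' \log^2 u)$.

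The main subtlety I anticipate is the boundary case $2v > u$: one must check that discarding the would-be doubled copies still preserves $\SumX{\cdot}$ exactly, which follows from observing that in this regime the contribution of $v$ to any valid subset sum is either $0$ or $v$. Everything else is essentially routine bookkeeping on top of the previous lemma.
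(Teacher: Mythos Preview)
Your proposal is correct and essentially the same as the paper's proof: process distinct values in ascending order (the paper uses a min-heap rather than a BST), fold pairs of copies of $v$ into copies of $2v$, drop anything that lands beyond $u$, and bound the number of distinct values ever created by $O(n'\log u)$ via the observation that every such value has the form $2^i v$ with $v\in\setOf{S}$. The only differences---BST versus heap, discarding at insertion time versus stopping when the extracted minimum exceeds $u$, bounding the doubling depth by $\log(u/v)$ versus $\log n$---are cosmetic.
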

\begin{proof}
    Copy the elements of $S$ into a working multiset $X$. Maintain the elements of $\setOf{X}$ in a heap $D$, and let $T$ initially be the empty set. In each iteration, extract the minimum element $x$ from the heap $D$. If $x>u$, we stop.

    If $\mY{X}{x} \leq 2$, then delete $x$ from $X$, and add $x$, with its appropriate multiplicity, to the output multiset $T$, and continue to the next iteration.

    If $\mY{X}{x} > 2$, then delete $x$ from $X$, add $x$ to the output set $T$ (with multiplicity one), insert the number $2x$ into $X$ with multiplicity $m' = \floor{(\mY{X}{x} -1) /2}$, (updating also the heap $D$ -- by adding $2x$ if it is not already in it), and set
    $\mY{X}{x} \leftarrow \mY{X}{x} - 2 m'$. The algorithm now continues to the next iteration.

    At any point in time, we have that $\SumX{S}= \SumX{X \cup T}$, and every iteration takes $O( \log u)$ time, and and as such overall, the running time is $O(\cardX{T} \log u)$, as each iteration increases $\cardX{T}$ by at most two. Finally, notice that every element in $T$ is of the form $2^i x, x \in S$ for some $i$, where $i \leq \log n$, and thus $\cardX{T} = O(n' \log u)$.
\end{proof}

Note that the following lemma refers to sets.

\begin{lemma}
    \lemlab{o:plus}%
    Given two \emph{sets} $S, T \subseteq \IY{0}{u}$, one can compute $S \oplus T$ in $O( u \log u )$ time. 
\end{lemma}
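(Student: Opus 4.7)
The plan is to reduce $S \oplus T$ to a polynomial multiplication, which can then be done in $O(u \log u)$ time using \FFT.

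First I would associate to each set its characteristic polynomial: define
\[
    P_S(x) = \sum_{s \in S} x^s, \qquad P_T(x) = \sum_{s \in T} x^s.
\]
Both are polynomials of degree at most $u$ with $\{0,1\}$ coefficients, and each can be written down in $O(u)$ time by scanning an indicator array of length $u+1$. The key identity is that the coefficient of $x^k$ in the product $P_S(x) \cdot P_T(x)$ equals $|\{(s,t) \in S \times T : s+t = k\}|$, so this coefficient is nonzero if and only if $k \in S \oplus T$.

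Next I would compute the product $P_S \cdot P_T$ using \FFT. Since both factors have degree at most $u$, the product has degree at most $2u$, and the standard \FFT-based polynomial multiplication computes all its coefficients in $O(u \log u)$ time (\cite[Chapter 30]{cormen2014introduction}). Finally, I would scan the resulting coefficient array of length $2u+1$ and output every index $k$ whose coefficient is nonzero; this last scan takes $O(u)$ time, so the overall running time is $O(u \log u)$.

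The main subtlety I expect is just a remark on the computational model: the intermediate \FFT\ values are not integers in general, so one either runs \FFT\ over a suitable finite field (or ring of integers modulo a carefully chosen prime) to avoid numerical issues, or appeals to standard results guaranteeing that rounding the complex \FFT\ output gives exact integer coefficients, since the true coefficients here are bounded by $\min(|S|,|T|) \leq u+1$. Either way the cost remains $O(u \log u)$ in the standard word-RAM model, and no obstacle beyond this is anticipated.
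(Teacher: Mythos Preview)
Your proposal is correct and essentially identical to the paper's proof: both form the characteristic polynomials $f_S$, $f_T$, multiply them via \FFT\ in $O(u\log u)$ time, and read off $S\oplus T$ from the nonzero coefficients. Your additional remark about the numerical model is a reasonable aside but not something the paper addresses explicitly.
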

\begin{proof}
Let $f_S(x) = \sum_{i \in S} x^i$ be the \emph{characteristic polynomial} of $S$. Construct, in a similar fashion, the polynomial $f_T$ and let $g = f_S * f_T$. Observe that the coefficient of $x^i$ in $g$ is greater than 0 if and only if $i \in S \oplus T$. As such, using \FFT, one can compute the polynomial $g$ in $O(u \log u)$ time, and extract $S \oplus T$ from it.
\end{proof}

\begin{observation} 
    If $P$ and $Q$ form a partition of multiset $S$, then $\SSumX{S} = \SSumX{P} \oplus \SSumX{Q}$.
\end{observation}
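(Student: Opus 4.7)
The plan is to establish the set equality $\SSumX{S} = \SSumX{P} \oplus \SSumX{Q}$ by a straightforward double-inclusion argument, leveraging the fact that $\{P,Q\}$ is a multiset partition of $S$ (i.e., for every $s$, the multiplicities satisfy $\mY{P}{s} + \mY{Q}{s} = \mY{S}{s}$, and in particular $P$ and $Q$ are disjoint as multisets with respect to the decomposition of each element's multiplicity).

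For the $\subseteq$ direction, I would start with an arbitrary $v \in \SSumX{S}$ and pick a witness $T \subseteq S$ with $\SumSetX{T} = v$. Defining $T_P = T \cap P$ and $T_Q = T \cap Q$ (where the intersection is taken in the multiset sense so that the multiplicities of $T$ split according to the partition), I get $T_P \subseteq P$, $T_Q \subseteq Q$, and the two are disjoint sub-multisets whose union is $T$. Summing gives $\SumSetX{T} = \SumSetX{T_P} + \SumSetX{T_Q}$, hence $v \in \SSumX{P} \oplus \SSumX{Q}$.

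For the $\supseteq$ direction, take arbitrary $p \in \SSumX{P}$ and $q \in \SSumX{Q}$, with witnesses $T_P \subseteq P$ and $T_Q \subseteq Q$. Because $P$ and $Q$ share no element's multiplicity, the union $T = T_P \cup T_Q$ (as multisets) is a valid sub-multiset of $S$, and $\SumSetX{T} = p + q$, so $p + q \in \SSumX{S}$.

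There is no real obstacle; the only point requiring care is the multiset bookkeeping, which must be handled consistently on both sides of the argument. Once one fixes the convention that a partition $\{P, Q\}$ of the multiset $S$ is a splitting of each element's multiplicity, the proof reduces to a two-line bijective correspondence between sub-multisets $T \subseteq S$ and pairs $(T_P, T_Q)$ with $T_P \subseteq P$ and $T_Q \subseteq Q$. Since this is essentially immediate from the definitions, I would expect the authors' proof to be similarly short, possibly even omitted as self-evident given the notation established earlier.
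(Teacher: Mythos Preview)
Your proposal is correct, and your anticipation was exactly right: the paper states this as an observation with no proof at all, treating it as immediate from the definitions. Your double-inclusion argument is the natural justification one would supply if pressed, and the multiset bookkeeping you describe is the only point that would require any care.
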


Combining all of the above together, we can now state the following lemma which simplifies the upcoming analysis.

\begin{lemma}
    \lemlab{onlysets}
    Given an algorithm that computes $\SumX{S}$ in $\mathrm{T}(n,u)=\Omega(u\log^2 u)$ time, for any \emph{set} $S\subseteq \IntRange{u}$ with $n$ elements, then one can compute $\SumX{S'}$ for any \emph{multiset} $S'\subseteq \IntRange{u}$, with $n'$ distinct elements, in $O\bigl(\mathrm{T}(n'\log u,u)\bigr)$ time.
\end{lemma}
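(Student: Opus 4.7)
The plan is to reduce the multiset case to the set case by first sparsifying with \lemref{sparsify} and then splitting the resulting multiset of multiplicity at most two into two sets.

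First, I would apply \lemref{sparsify} to $S'$, obtaining in $O(n'\log^2 u)$ time a multiset $T$ with $\SumX{S'} = \SumX{T}$, cardinality $\cardX{T} = O(n'\log u)$, and every element of multiplicity at most two. Let $T_1 = \setOf{T}$ be the set of distinct elements of $T$, and $T_2 \subseteq T_1$ be the set of those elements whose multiplicity in $T$ is exactly two. Then $T_1$ and $T_2$ are genuine sets, both contained in $\IntRange{u}$, each of size at most $\cardX{T} = O(n'\log u)$, and as multisets $T = T_1 \uplus T_2$.

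Next, I would invoke the given set algorithm on $T_1$ and $T_2$ to compute $\SumX{T_1}$ and $\SumX{T_2}$, each in $\mathrm{T}(O(n'\log u), u)$ time. Since $T_1, T_2$ partition $T$ as a multiset, the observation preceding the lemma gives $\SSumX{T} = \SSumX{T_1} \oplus \SSumX{T_2}$, and a sum in this set lies in $\IY{0}{u}$ only if both of its summands do, so
\[
\SumX{T} = \bigl(\SumX{T_1} \oplus \SumX{T_2}\bigr) \cap \IY{0}{u}.
\]
Applying \lemref{o:plus} to these two subsets of $\IY{0}{u}$ computes the right-hand side in $O(u \log u)$ time, after which we truncate to $\IY{0}{u}$.

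For the running time, the sparsification costs $O(n'\log^2 u)$, the two set-algorithm invocations cost $O(\mathrm{T}(n'\log u, u))$, and the FFT combination costs $O(u\log u)$. Because $n' \leq u$ (the distinct elements lie in $\IntRange{u}$) and $\mathrm{T}(n,u) = \Omega(u\log^2 u)$, both $O(n'\log^2 u)$ and $O(u\log u)$ are absorbed into $O(\mathrm{T}(n'\log u, u))$, giving the claimed bound. The only mild point to be careful about is verifying that $T_2$ can indeed be treated as a set (not a sub-multiset of $T$) when invoking the set algorithm — this is immediate because each element of $T$ appears at most twice, so $T_2$ naturally consists of distinct values.
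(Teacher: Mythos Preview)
Your proof is correct and follows essentially the same approach as the paper: sparsify via \lemref{sparsify}, split the resulting multiset of multiplicity at most two into two genuine sets, run the set algorithm on each, and combine via \lemref{o:plus}. The only cosmetic difference is that you give a concrete partition ($T_1=\setOf{T}$, $T_2=$ elements of multiplicity two) where the paper simply says ``partition it into two sets $P$ and $Q$''; your running-time accounting is likewise slightly more careful but equivalent.
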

\begin{proof}
   First, from $S$, compute the multiset $T$ as described in \lemref{sparsify}, in $O(u\log^2 u)$ time. As every element in $T$ appears at most twice, partition it into two \emph{sets} $P$ and $Q$. Then $\SumX{T} = \left(\SumX{P}\oplus \SumX{Q}\right) \cap \IY{0}{u}$, which is computed using \lemref{o:plus}, in $O(u\log u)$ time. This reduces all subset sums for multisets of $n'$ distinct elements to two instances of all subset sums for sets of size $O(n'\log u)$.~
\end{proof}
\subsection{The input is a set of positive integers.}


In the previous section it was shown that there is little loss in generality and running time if the input is restricted to sets instead of multisets. For simplicity of exposition, we assume the input is a \emph{set} from here on.
 
Here, we present the main algorithm: At a high level it uses a geometric partitioning on the input range $\IY{0}{u}$ to split the numbers into groups of exponentially long intervals. Each of these groups is then processed separately abusing their interval range that bounds the cardinality of the sets from that group.

\begin{observation}
    \obslab{rec}
    Let $g$ be a positive, superadditive (i.e. $g(x+y) \geq g(x)+g(y), \forall x,y$) function. For a function $f(n,m)$ satisfying
    \[f(n,m) = \max_{m_1+m_2 = m}\left\{ f\!\left(\frac{n}{2},m_1\right) + f\!\left(\frac{n}{2},m_2\right) + g(m) \right\}\:,\]
    we have that $f(n,m) = O\left(g(m)\log n\right)$.
\end{observation}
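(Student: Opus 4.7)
The plan is to prove the stated bound by induction on $n$, with inductive hypothesis $f(n,m) \le C\cdot g(m)\log n$ for a small absolute constant $C$, and the crucial ingredient is superadditivity of $g$. Superadditivity rephrases as follows: for any partition $m = m_1 + m_2$, $g(m_1) + g(m_2) \le g(m)$, so splitting $m$ never increases the combined $g$-cost of the two children. This is exactly the fact that controls the recursion.

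First I would unfold one level of the recurrence using the hypothesis at scale $n/2$:
\[
f(n,m) \le C\, g(m_1)\log(n/2) + C\, g(m_2)\log(n/2) + g(m) \le C\, g(m)\log(n/2) + g(m),
\]
where the last inequality is superadditivity applied to whichever maximising split $m_1 + m_2 = m$ is realised. Rewriting gives $f(n,m) \le C\, g(m)\log n - (C-1)\, g(m)$, which is at most $C\, g(m)\log n$ as soon as $C \ge 1$, closing the induction. The base case is $n$ of constant size, where $f$ reduces to a single evaluation and the bound $f(O(1), m) = O(g(m))$ holds trivially.

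A second, equivalent viewpoint that may be cleaner to present is the recursion tree. Unfolding the recurrence fully, at depth $k$ the tree has at most $2^k$ nodes whose second arguments $m_1^{(k)},\dots,m_{2^k}^{(k)}$ form a partition of $m$ (ignoring zero parts). Iterating superadditivity yields $\sum_i g(m_i^{(k)}) \le g(m)$ at every single level. Since the tree has depth $O(\log n)$, summing the per-level bound gives the claimed total $O(g(m)\log n)$.

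I do not anticipate a real obstacle here. The only thing to be careful about is that the recurrence is stated with a \emph{maximum} over splits, so the induction must tolerate the worst-case partition; but superadditivity holds uniformly over all partitions of $m$, so the worst-case split enjoys the same bound as any other. The underlying content is entirely the inequality $g(m_1) + g(m_2) \le g(m)$, used once per level.
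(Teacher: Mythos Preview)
Your argument is correct. The paper states this as an \emph{observation} and gives no proof at all, so there is nothing to compare against; your induction (or equivalently the recursion-tree unfolding) is exactly the standard justification one expects for such a recurrence, and the key inequality $g(m_1)+g(m_2)\le g(m)$ is indeed the only substantive point.
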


\begin{theorem}
    \thmlab{stupid2}%
    Given a set of positive integers $S$ with total sum $\sigma$, one can compute the set of all subset sums $\SSumX{S}$ in $O(\sigma \log \sigma \log n)$ time.
\end{theorem}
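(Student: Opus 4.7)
The plan is a textbook divide-and-conquer on $S$, with the combine step implemented by \FFT via \lemref{o:plus}. Partition $S$ arbitrarily into two sets $P$ and $Q$ with $|P| = \lceil n/2 \rceil$ and $|Q| = \lfloor n/2 \rfloor$, and let $\sigma_P = \SumSetX{P}$, $\sigma_Q = \SumSetX{Q}$, so $\sigma_P + \sigma_Q = \sigma$. Recursively compute $\SSumX{P}$ and $\SSumX{Q}$; since every subset sum of $P$ lies in $\IY{0}{\sigma_P}$ and similarly for $Q$, these are sets contained in $\IY{0}{\sigma}$.

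Next, use the observation preceding \lemref{onlysets} that $\SSumX{S} = \SSumX{P} \oplus \SSumX{Q}$, and invoke \lemref{o:plus} with $u := \sigma$ (the output lies in $\IY{0}{\sigma}$) to compute the sumset in $O(\sigma \log \sigma)$ time. This gives the recurrence
\[
T(n, \sigma) \;=\; T\!\left(\tfrac{n}{2}, \sigma_P\right) + T\!\left(\tfrac{n}{2}, \sigma_Q\right) + O(\sigma \log \sigma),
\qquad \sigma_P + \sigma_Q = \sigma.
\]

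To apply \obsref{rec} I need $g(\sigma) := \sigma \log \sigma$ to be superadditive, which follows from the fact that $g(0) = 0$ and $g$ is convex with nonnegative derivative on $[1,\infty)$ (so $g(x+y) \ge g(x) + g(y)$ for $x,y \ge 1$; the degenerate case where one side is empty is trivial). \obsref{rec} then yields $T(n, \sigma) = O(\sigma \log \sigma \cdot \log n)$, matching the claimed bound.

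There is no real obstacle here; the only point deserving a sentence of care is that the recursive calls are on inputs of total sum $\sigma_P$ and $\sigma_Q$ respectively (not both $\sigma$), so the recurrence genuinely splits the ``work parameter'' additively, which is precisely what \obsref{rec} is designed to handle. Since the leaves of the recursion have $n = 1$, contributing $O(1)$ work each, and the recursion depth is $O(\log n)$, the total time is dominated by the $\sigma \log \sigma$ term at each level, summing to $O(\sigma \log \sigma \log n)$.
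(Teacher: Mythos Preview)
Your proof is correct and follows essentially the same approach as the paper: halve $S$ by cardinality, recurse, combine via \lemref{o:plus}, and solve the resulting recurrence with \obsref{rec}. The only difference is cosmetic---you spell out the superadditivity of $\sigma\log\sigma$ and the base case, which the paper leaves implicit.
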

\begin{proof}
Partition $S$ into two sets $L,R$ of (roughly) equal cardinality, and compute recursively $L' = \SSumX{L}$ and $R' = \SSumX{R}$. Next, compute $\SSumX{S} = L' \oplus R'$ using \lemref{o:plus}. The recurrence for the running time is $f(n,\sigma) = \max_{\sigma_1+\sigma_2=\sigma}\{ f(n/2,\sigma_1) + f(n/2,\sigma_2) + O(\sigma \log \sigma)$\}, and the solution to this recurrence, by \obsref{rec}, is $O(\sigma \log \sigma \log n)$.
\end{proof}

\begin{remark} The standard divide-and-conquer algorithm of \thmref{stupid2} was already known in \cite{Serang2015,Eppstein1997375}, here we showed a better analysis. Note, that the basic divide-and-conquer algorithm without the \FFT addition was known much earlier \cite{Horowitz:1974:CPA:321812.321823}.
\end{remark}

\begin{lemma}[\cite{Serang2015,Eppstein1997375}]
    \lemlab{stupid}%
    Given a set $S \subseteq \IntRange{\Delta}$ of size $n$, one can compute the set $\SSumX{S}$ in $O\bigl( n \Delta \log (n\Delta) \log n\bigr)$ time.
\end{lemma}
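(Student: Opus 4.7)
The plan is to reduce this directly to \thmref{stupid2}. Since every element of $S$ lies in $\IntRange{\Delta}$ and $\cardin{S}=n$, the total sum satisfies
\[
\sigma \;=\; \SumSetX{S} \;=\; \sum_{s \in S} s \;\leq\; n\Delta.
\]
I would invoke \thmref{stupid2} on the set $S$, which produces $\SSumX{S}$ in $O(\sigma \log \sigma \log n)$ time. Substituting the bound $\sigma \leq n\Delta$ and using monotonicity of $x \log x$ for $x \geq 1$, the running time becomes $O(n\Delta \log(n\Delta) \log n)$, as claimed.

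There is essentially no obstacle here: the statement is a weakening of \thmref{stupid2} that replaces the sharper parameter $\sigma$ by the crude upper bound $n\Delta$, which is convenient for later use when we only have control over the size of the set and the interval containing it, rather than over the exact sum. The only thing to be slightly careful about is that the $\log \sigma$ factor in \thmref{stupid2} becomes $\log(n\Delta)$ after substitution, which matches the statement exactly.
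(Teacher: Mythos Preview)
Your proposal is correct and is exactly the paper's own argument: observe $\SumSetX{S} \leq n\Delta$ and apply \thmref{stupid2}.
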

\begin{proof}
Observe that $\SumSetX{S} \leq \Delta n$ and apply \thmref{stupid2}.~
\end{proof}

\begin{lemma}
    \lemlab{o:plus2}%
    Given two sets of points $S, T \subseteq \IY{0}{u} \times \IY{0}{v}$, one can compute $S \oplus T$ in $O\bigl( uv \log (uv)\bigr)$ time. 
\end{lemma}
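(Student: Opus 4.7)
The plan is to reduce the two-dimensional sumset problem to the one-dimensional sumset problem already solved in \lemref{o:plus}, via a standard Kronecker-style linearization of coordinates.

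First, observe that every element of $S \oplus T$ lies in $\IY{0}{2u} \times \IY{0}{2v}$, so the second coordinate of any sum is at most $2v$. This suggests defining the encoding map
\[
    \varphi : \IY{0}{2u} \times \IY{0}{2v} \to \IY{0}{(2u+1)(2v+1)-1},
    \qquad \varphi(i,j) = i\,(2v+1) + j.
\]
The key property is additivity: for any $(i_1,j_1),(i_2,j_2)\in\IY{0}{u}\times\IY{0}{v}$ we have $\varphi(i_1+i_2,\,j_1+j_2) = \varphi(i_1,j_1) + \varphi(i_2,j_2)$, because the $j$-coordinates never exceed $2v$ and hence never ``carry'' into the $i$-slot. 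Moreover $\varphi$ is injective on its domain.

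Next, apply $\varphi$ to each point of $S$ and $T$, obtaining sets $\varphi(S),\varphi(T)\subseteq\IY{0}{N}$ with $N = (2u+1)(2v+1)-1 = O(uv)$. By additivity and injectivity, $\varphi(S \oplus T) = \varphi(S) \oplus \varphi(T)$, so it suffices to compute the one-dimensional sumset of $\varphi(S)$ and $\varphi(T)$ and then invert $\varphi$ pointwise in linear time. Using \lemref{o:plus} on these sets of integers in $\IY{0}{N}$ gives a running time of $O\bigl(N \log N\bigr) = O\bigl(uv \log (uv)\bigr)$, as required.

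There is essentially no hard step here; the only thing to verify carefully is that the base $(2v+1)$ used in $\varphi$ is large enough that the encoding commutes with addition on all pairs that can actually arise as sums, which is exactly why we pick $2v+1$ rather than $v+1$.
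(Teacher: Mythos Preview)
Your proof is correct. The paper takes a slightly different route: it forms the bivariate characteristic polynomials $f_S(x,y)=\sum_{(i,j)\in S} x^i y^j$ and $f_T$ directly and multiplies them with a two-dimensional \FFT, then reads off $S\oplus T$ from the nonzero coefficients of the product. Your Kronecker-style linearization $\varphi(i,j)=i(2v+1)+j$ instead reduces the problem to the one-dimensional \lemref{o:plus}. Both arguments are standard and yield the same $O(uv\log(uv))$ bound; your version has the minor advantage of reusing \lemref{o:plus} as a black box rather than invoking multidimensional \FFT, while the paper's version is marginally more direct in that it avoids the encoding/decoding step. The two are really the same computation under the hood, since the usual implementation of 2D \FFT is exactly this kind of packing into a 1D array.
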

\begin{proof}    
Let $f_S(x,y) = \sum_{(i,j) \in S} x^iy^j$ be the characteristic polynomial of $S$. Construct, similarly, the polynomial $f_T$, and let $g = f_S * f_T$. Note that the coefficient of $x^iy^j$ is greater than $0$ if and only if $(i,j)\in S \oplus T$. One can compute the polynomial $g$ by a straightforward reduction to regular \FFT (see multidimensional \FFT \cite[Chapter 12.8]{Blahut:1985:FAD:537283}), in $O( uv \log uv)$ time, and extract $S \oplus T$ from it.
\end{proof}

\begin{lemma}
    \lemlab{s:m:3}%
    Given two disjoint sets $B, C \subseteq \IY{x}{x+\ell}$ and $\SumT{B}$, $\SumT{C}$, one can compute $\SumT{B\cup C}$ in $O\left(\ell \alpha^2 \log (\ell \alpha) \right)$ time.
\end{lemma}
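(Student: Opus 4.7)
The plan is to reduce the problem to a single two-dimensional convolution and apply \lemref{o:plus2}. The naive encoding of $\SumT{B}$ as a planar point set $\{(s,k) : (s,k)\in\SumT{B}\}$ lives in a box $\IY{0}{\alpha(x+\ell)}\times\IY{0}{\alpha}$, which is far too large when $x$ is big. The essential observation, and the source of the tight bound, is that any $T\subseteq B$ with $|T|=k$ has $\SumSetX{T}\in[kx,\,k(x+\ell)]$, so the shifted value $\SumSetX{T}-kx$ always lies in $\IY{0}{k\ell}\subseteq\IY{0}{\alpha\ell}$.

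Concretely, I would first transform each $(s,k)\in\SumT{B}$ into the planar point $(s-kx,\,k)$, producing a set $B'\subseteq\IY{0}{\alpha\ell}\times\IY{0}{\alpha}$; do likewise for $C$ to get $C'$. The crucial algebraic point is that this shift is compatible with disjoint union: if $(s_B,k_B)$ corresponds to $T_B\subseteq B$ and $(s_C,k_C)$ to $T_C\subseteq C$, then $(s_B+s_C,\,k_B+k_C)$ corresponds to $T_B\cup T_C\subseteq B\cup C$ (using that $B,C$ are disjoint so every $T\subseteq B\cup C$ splits uniquely), and the associated shifted point is
\[
\bigl((s_B-k_Bx)+(s_C-k_Cx),\,k_B+k_C\bigr),
\]
which is exactly what the Minkowski sum $B'\oplus C'$ produces.

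Next, invoke \lemref{o:plus2} on $B'$ and $C'$ with $u=\alpha\ell$ and $v=\alpha$, computing $B'\oplus C'$ in $O(uv\log(uv))=O(\ell\alpha^2\log(\ell\alpha))$ time. Then filter out points whose second coordinate exceeds $\alpha$ (since $\SumT{\cdot}$ only records subsets of size at most $\alpha$), and finally invert the shift by mapping each remaining $(s',k)$ to $(s'+kx,\,k)$, yielding $\SumT{B\cup C}$. The transformations and the filter run in time linear in the size of the output of the convolution, so the 2D FFT step dominates.

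I do not foresee any real obstacle: the only thing to be careful about is verifying that the shift-and-unshift is a bijection between $\SumT{B\cup C}$ and the relevant subset of $B'\oplus C'$, which is immediate from disjointness of $B$ and $C$ and from the fact that a pair $(s,k)$ is determined by its shifted image $(s-kx,k)$. The stated running time then follows directly from \lemref{o:plus2}.
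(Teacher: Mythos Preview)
Your proposal is correct and is essentially identical to the paper's proof: the paper defines the map $f((i,j))=(i-xj,j)$, applies it to $\SumT{B}$ and $\SumT{C}$ to land in $\IY{0}{\ell\alpha}\times\IY{0}{\alpha}$, computes the Minkowski sum via \lemref{o:plus2}, intersects with $\IY{0}{\ell\alpha}\times\IY{0}{\alpha}$, and applies $f^{-1}$. Your shift $(s,k)\mapsto(s-kx,k)$ is exactly this $f$, and your justification of the box bound and of compatibility with $\oplus$ matches the paper's reasoning.
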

\begin{proof}
Consider the function $f\bigl((i,j)\bigr) = (i - xj,j)$. Let $X=f\left(\SumT{B}\right)$ and $Y= f\left(\SumT{C}\right)$. If $(i,j)\in \SumT{B}\cup \SumT{C}$, then $i=jx+y$ for $y\in \IY{0}{\ell j}$. Hence $X,Y\subseteq \IY{0}{\ell \alpha}\times \IY{0}{\alpha}$.
     
Computing $X\oplus Y$ using the algorithm of \lemref{o:plus2} can be done in $O\left(\ell \alpha^2 \log (\ell \alpha) \right)$ time. Let $Z=(X\oplus Y)\cap (\IY{0}{\ell \alpha} \times \IY{0}{\alpha})$. The set $\SumT{B\cup C}$ is then precisely $f^{-1}(Z)$. Projecting $Z$ back takes an additional $O\left(\ell \alpha^2 \log (\ell \alpha) \right)$ time. 
 \end{proof}

\begin{lemma}
    \lemlab{s:m:2}%
    Given a set $S \subseteq \IY{x}{x+\ell}$ of size $n$, computing the set $\SumT{S}$ takes $O\bigl(\ell \alpha^2 \log (\ell \alpha) \log n\bigr)$ time.
\end{lemma}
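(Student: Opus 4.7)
The plan is to apply divide-and-conquer on $S$, using \lemref{s:m:3} as the combine step. If $|S| \leq 1$, I would compute $\SumT{S}$ trivially in $O(1)$ time. Otherwise, I would partition $S$ arbitrarily into two disjoint subsets $L$ and $R$ of sizes $\lceil n/2 \rceil$ and $\lfloor n/2 \rfloor$. Since $L, R \subseteq \IY{x}{x+\ell}$ inherit the interval structure of $S$, the subproblems have the same form, and I would recursively compute $\SumT{L}$ and $\SumT{R}$. Finally, applying \lemref{s:m:3} to these two outputs yields $\SumT{L \cup R} = \SumT{S}$ in $O(\ell\alpha^2 \log(\ell\alpha))$ time.

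For the running time, the algorithm satisfies the recurrence $T(n) = 2T(n/2) + O(\ell\alpha^2 \log(\ell\alpha))$. Invoking \obsref{rec} with the combine cost $g(\alpha) = \ell\alpha^2 \log(\ell\alpha)$, which is superadditive as a quadratic (times a log) function of $\alpha$, the recurrence solves to $O(g(\alpha) \log n) = O(\ell\alpha^2 \log(\ell\alpha) \log n)$, as claimed.

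The main subtlety is verifying that the truncated information stored in $\SumT{L}$ and $\SumT{R}$ is enough to reconstruct $\SumT{L \cup R}$. This is true because any $T \subseteq L \cup R$ with $|T| \leq \alpha$ decomposes uniquely as $T = T_L \sqcup T_R$ with $T_L \subseteq L$ and $T_R \subseteq R$, and necessarily $|T_L|, |T_R| \leq \alpha$; hence the pair $(\SumSetX{T}, |T|) = (\SumSetX{T_L} + \SumSetX{T_R},\, |T_L| + |T_R|)$ is assembled from an element of $\SumT{L}$ and an element of $\SumT{R}$. The combine routine of \lemref{s:m:3} performs exactly this aggregation (via 2D polynomial multiplication after the coordinate shift) and truncates to the target region $\IY{0}{\ell\alpha} \times \IY{0}{\alpha}$, so no realizable pair with $|T|\leq \alpha$ is lost, and conversely no spurious pairs appear.
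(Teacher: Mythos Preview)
Your divide-and-conquer and your justification of correctness are fine, but the running-time analysis has a genuine gap. With an \emph{arbitrary} equal-sized partition, both halves $L$ and $R$ still live in the full interval $\IY{x}{x+\ell}$, so by \lemref{s:m:3} every combine step at every level of the recursion costs $O(\ell\alpha^2\log(\ell\alpha))$, a quantity that does not shrink with depth. Your recurrence $T(n)=2T(n/2)+C$ with $C=O(\ell\alpha^2\log(\ell\alpha))$ therefore solves to $T(n)=O(nC)$, not $O(C\log n)$: there are $n-1$ internal nodes in the recursion tree and each one pays $C$. Your invocation of \obsref{rec} is invalid because that observation requires a second parameter $m$ that is additively split as $m_1+m_2=m$ across the two subproblems; in your scheme neither $\ell$ nor $\alpha$ is split, so there is no such parameter, and treating $g$ as a function of $\alpha$ is a misreading of the observation.

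The paper fixes exactly this by splitting at the median $\delta$ of $S$, so that $L\subseteq\IY{x}{\delta}$ and $R\subseteq\IY{\delta+1}{x+\ell}$ lie in subintervals of lengths $\ell_1,\ell_2$ with $\ell_1+\ell_2=\ell$. Now the recurrence becomes $f(n,\ell)=\max_{\ell_1+\ell_2=\ell}\{f(n/2,\ell_1)+f(n/2,\ell_2)+O(\ell\alpha^2\log(\ell\alpha))\}$, and \obsref{rec} applies with $g(\ell)=\ell\alpha^2\log(\ell\alpha)$ (which is superadditive in $\ell$) to give the claimed $O(\ell\alpha^2\log(\ell\alpha)\log n)$ bound. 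The median split is not cosmetic; it is what makes the interval length---and hence the combine cost---shrink geometrically down the tree.
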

\begin{proof}
Compute the median of $S$, denoted by $\delta$, in linear time. Next, partition $S$ into two sets $L = S \cap \IntRange{\delta}$ and $R=S \cap \IY{\delta+1}{x+\ell}$. Compute recursively $L' = \SumT{L}$ and $R' = \SumT{R}$, and combine them into $\SumT{L\cup R}$ using \lemref{s:m:3}. The recurrence for the running time is:
\[ f(n,\ell) = \!\!\!\max_{\ell_1+\ell_2=\ell} \!\left\{\!f\!\left(\frac{n}{2},\ell_1\right)+f\!\left(\frac{n}{2},\ell_2\right) + O\!\left(\ell \alpha^2 \log (\ell \alpha)\right)\!\right\}\:, \]
which takes $O\bigl( \ell \alpha^2 \log (\ell \alpha) \log n\bigr)$ time, by \obsref{rec}. 
\end{proof}

\begin{lemma}
    \lemlab{s:m}%
    Given a set $S \subseteq \IY{x}{x+\ell}$ of size $n$, computing the set $\SumX{S}$ takes $O\left((u/x)^2 \ell \log (\ell u/x) \log n\right)$ time.
\end{lemma}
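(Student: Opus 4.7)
The plan is to reduce the computation of $\SumX{S}$ to an instance of $\SumT{S}$ and apply \lemref{s:m:2} with a carefully chosen $\alpha$. The key observation is that every element of $S$ is at least $x$, so a subset $T \subseteq S$ of cardinality strictly greater than $\alpha := \lfloor u/x \rfloor$ has total sum $\SumSetX{T} \geq (\alpha+1)x > u$, and hence contributes nothing to $\SumX{S}$. Equivalently,
\[
   \SumX{S} \;=\; \bigl\{\, s \;\big|\; (s,j)\in \SumT{S},\ s\leq u \,\bigr\},
\]
so it suffices to compute $\SumT{S}$ for $\alpha = \lfloor u/x \rfloor$ and then project onto (and filter by) the first coordinate.

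I would first set $\alpha = \lfloor u/x \rfloor$, invoke \lemref{s:m:2} to compute $\SumT{S}$ in time $O\bigl(\ell \alpha^2 \log(\ell\alpha)\log n\bigr)$, and then scan the resulting pairs to discard those with first coordinate exceeding $u$. Substituting the value of $\alpha$ gives the claimed bound $O\bigl((u/x)^2 \,\ell \,\log(\ell u/x) \,\log n\bigr)$, since the final projection/filter step is linear in the size of $\SumT{S}$, which is dominated by the running time of \lemref{s:m:2}.

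There is essentially no obstacle here, only a sanity check: one must verify that the size bound $\alpha$ is indeed the right truncation. The implication ``$|T| > u/x \Rightarrow \SumSetX{T} > u$'' uses only that every element of $S$ lies in $\IY{x}{x+\ell}$ and is therefore at least $x$; no assumption on $\ell$ relative to $x$ is needed. Consequently, all subsets whose sums lie in $\IY{0}{u}$ are already captured by $\SumT{S}$ with this value of $\alpha$, and the lemma follows.
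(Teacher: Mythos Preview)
Your proposal is correct and essentially identical to the paper's own proof: set $\alpha=\lfloor u/x\rfloor$, invoke \lemref{s:m:2}, then project onto the first coordinate and intersect with $\IY{0}{u}$. Your explicit justification that any subset of cardinality exceeding $\lfloor u/x\rfloor$ has sum greater than $u$ is exactly the (implicit) reason this choice of $\alpha$ suffices.
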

\begin{proof}
Apply \lemref{s:m:2} by setting $\alpha=\lfloor u/x \rfloor$ to get $\SumT{S}$. Projecting down by ignoring the last coordinate and then intersecting with $\IY{0}{u}$ gives the set $\SumX{S}$.
\end{proof}

\begin{lemma}
    \lemlab{lem:partition}
    Given a set $S \subseteq \IntRange{u}$ of size $n$ and a parameter $r_0\geq 1$, partition $S$ as follows:
    \begin{compactitem}
        \item $S_0 = S \cap \IntRange{r_0}$, and
        \item for $i >0$, $S_i = S \cap \IY{r_{i-1}+1}{r_i}$, where $r_i= \floor{2^i r_0}$.
    \end{compactitem}
    The resulting partition is composed of $\nu = O(\log u)$ sets $S_0, S_1, \ldots, S_\nu$ and can be computed in $O(n \log n)$ time.
\end{lemma}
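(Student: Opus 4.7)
The plan is to separately verify that $\nu = O(\log u)$ and to describe a sort-and-sweep algorithm that produces the partition in $O(n\log n)$ time; the lemma is essentially bookkeeping, so the proof should be short.

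For the bound on $\nu$: since $r_0 \geq 1$ and $r_i = \floor{2^i r_0} \geq 2^{i-1} r_0$ for $i \geq 1$, once $i \geq \ceil{\log_2 u} + 1$ we have $r_i \geq u$, which means every element of $S$ already lies in some $S_j$ with $j \leq i$. Taking $\nu$ to be the smallest such index gives $\nu \leq \ceil{\log_2 u} + 1 = O(\log u)$, and by construction the sets $S_0, S_1, \ldots, S_\nu$ are pairwise disjoint and cover $S$.

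For the algorithm: sort $S$ in $O(n\log n)$ time and precompute the thresholds $r_0, r_1, \ldots, r_\nu$ in $O(\nu)$ time. Then scan the sorted elements in increasing order while maintaining a pointer $i$ into the threshold list (initialized to $i = 0$): for the current element $x$, advance $i$ (opening a fresh empty bucket each time) while $x > r_i$, then append $x$ to $S_i$. Each element is touched once and $i$ advances at most $\nu$ times in total, so the sweep runs in $O(n + \nu) = O(n + \log u)$ time; together with the sort, the total is $O(n \log n + \log u) = O(n \log n)$ under the standard assumption that $\log u$ is dominated by $n \log n$ in the pseudopolynomial regime of interest.

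There is no real obstacle here. The only point worth spelling out is that the floor in $r_i = \floor{2^i r_0}$ does not spoil the geometric growth, which follows from the estimate $r_i \geq 2^{i-1} r_0$ used above.
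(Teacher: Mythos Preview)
Your proof is correct and follows the same approach as the paper: sort, then bucket in a single pass, with the $O(\log u)$ bound on $\nu$ coming from the geometric growth of the $r_i$. If anything, you are more careful than the paper, which dispatches the whole thing in two sentences and does not bother to account for the floor or the additive $O(\log u)$ term.
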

\begin{proof}
Sort the numbers in $S$, and throw them into the sets, in the obvious fashion. As for the number of sets, observe that $2^i r_0 > u$ when $i>\log u$. As such, after $\log n$ sets, $r_\nu > u$.
\end{proof}

\begin{lemma}
    \lemlab{generate:intervals}%
    Given a set $S \subseteq \IntRange{u}$ of size $n$. For $i=0, \ldots, \nu =O(\log u)$, let $S_i$ be the $i$\th set in the above partition and let $|S_i|=n_i$. One can compute $\SumX{S_i}$, for all $i$, in overall
    $O\left( (u^2/r_0 + \min\{r_0,n\} r_0) \log^2 u \right)$ time.
\end{lemma}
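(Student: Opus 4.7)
The plan is to handle the base set $S_0$ and the geometrically-growing sets $S_1, \ldots, S_\nu$ separately, and then show the costs sum to the claimed bound.

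First I would compute $\SumX{S_0}$. Since $S_0 \subseteq \IntRange{r_0}$ is a set (not a multiset), $n_0 \leq r_0$; trivially $n_0 \leq n$, so $n_0 \leq \min\{r_0,n\}$. Applying \lemref{stupid} with $\Delta = r_0$ gives $\SSumX{S_0}$ in $O\bigl(n_0 r_0 \log(n_0 r_0)\log n_0\bigr) = O\bigl(\min\{r_0,n\}\, r_0 \log^2 u\bigr)$ time; intersecting with $\IY{0}{u}$ yields $\SumX{S_0}$ without asymptotic overhead.

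Next I would handle each $S_i$ with $i \geq 1$ via \lemref{s:m}. Here $S_i \subseteq \IY{x_i}{x_i + \ell_i}$ where $x_i = r_{i-1}+1$ and $\ell_i = r_i - r_{i-1} - 1$. Since $r_i = \floor{2^i r_0} \leq 2r_{i-1}+1$, I have $\ell_i \leq r_{i-1} < x_i$, and $x_i \geq 2^{i-1} r_0$. Plugging into the bound of \lemref{s:m} gives, for each $i$,
\[
O\!\left( (u/x_i)^2\,\ell_i\,\log(\ell_i u/x_i)\log n\right)
= O\!\left( \frac{u^2}{x_i}\,\log^2 u\right)
= O\!\left( \frac{u^2}{2^{i-1} r_0}\,\log^2 u\right),
\]
using $\ell_i/x_i = O(1)$.

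Finally, summing this geometric series over $i = 1, \ldots, \nu$ gives
\[
\sum_{i=1}^{\nu} O\!\left(\frac{u^2}{2^{i-1} r_0}\log^2 u\right) = O\!\left(\frac{u^2}{r_0}\log^2 u\right),
\]
and combining with the $S_0$ cost yields the stated $O\bigl((u^2/r_0 + \min\{r_0,n\}\,r_0)\log^2 u\bigr)$ bound. The only subtlety I expect is the geometric-series bookkeeping: verifying that $\ell_i \leq x_i$ uniformly so that the $(u/x_i)^2 \ell_i$ factor simplifies to $u^2/x_i$, and checking that for the final index where $r_i$ may exceed $u$ the same estimate still applies (since truncating $\ell_i$ at $u - r_{i-1}$ only reduces the bound). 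Everything else reduces to invoking the earlier lemmas.
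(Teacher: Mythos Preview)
Your proposal is correct and follows essentially the same approach as the paper: handle $S_0$ via \lemref{stupid}, handle each $S_i$ with $i\geq 1$ via \lemref{s:m}, and sum the resulting geometric series. Your bookkeeping on the interval endpoints (taking $x_i=r_{i-1}+1$ and checking $\ell_i\leq x_i$) is in fact slightly more careful than the paper's, which simply writes $\ell_i=r_i-r_{i-1}=r_{i-1}$ and uses $r_{i-1}$ as the lower bound directly.
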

\begin{proof}
Because $S\subseteq \IntRange{u}$, $n=O(u)$. If $i=0$, then $S_0 \subseteq \IntRange{r_0}$, and one can compute $\SumX{S_0}$, in $O( n_0 r_0 \log (n_0 r_0) \log n_0)$ time, using \lemref{stupid}. Since $n_0\leq r_0$ \ and $n_0\leq n$, this simplifies to $O\left(\min\{n,r_0\} r_0 \log^2 u\right)$.

For $i > 0$, the sets $S_i$ contain numbers at least as large as $r_{i-1}$. Moreover, each set $S_i$ is contained in an interval of length $\ell_i = r_i-r_{i-1} = r_{i-1}$. Now, using \lemref{s:m}, one can compute $\SumX{S_i}$ in
\begin{math}
    O\bigl((u/r_{i-1})^2 \ell_i \log (\ell_i u/r_{i-1}) \log n_i\bigr)%
    =%
    O \pth{ \frac{u^2}{r_{i-1}} \log^2 u}
\end{math}
time. Summing this bound, for $i=1,\ldots, \nu$, results in $O\left(\frac{u^2}{r_0} \log^2 u\right)$ running time.
\end{proof}

\begin{theorem}
    \thmlab{theorem:main}
    Let $S \subseteq \IntRange{u}$ be a set of $n$ elements. Computing the set of all subset sums $\SumX{S}$ takes $O\left(\min\{\sqrt{n}u,u^{4/3}\}\log^2 u\right)$ time.
\end{theorem}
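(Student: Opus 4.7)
The plan is to glue together the two preceding lemmas: partition the input using \lemref{lem:partition} with a carefully chosen threshold $r_0$, compute subset sums of the pieces using \lemref{generate:intervals}, and finally combine the resulting $O(\log u)$ collections of subset sums via repeated application of \lemref{o:plus}.

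Concretely, first I would partition $S$ into sets $S_0, S_1, \ldots, S_\nu$ with $\nu = O(\log u)$ via \lemref{lem:partition}. Then \lemref{generate:intervals} gives all the sets $\SumX{S_i}$ in total time $O\bigl((u^2/r_0 + \min\{r_0, n\} r_0) \log^2 u\bigr)$. To assemble $\SumX{S}$ from these pieces I would iteratively apply \lemref{o:plus} along a balanced tree over the $\nu+1$ pieces; since each $\SumX{S_i} \subseteq \IY{0}{u}$, each of the $O(\log u)$ pairwise combinations takes $O(u \log u)$ time, adding only $O(u \log^2 u)$ to the overall running time, which is absorbed by the previous bound.

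The core of the argument is then the choice of $r_0$. The interesting step is to balance the two summands $u^2/r_0$ and $\min\{r_0, n\}r_0$, where the second term switches regime at $r_0 = n$. I would set
\[
r_0 = \min\!\left\{ \frac{u}{\sqrt{n}},\; u^{2/3} \right\}.
\]
Case 1: if $n \leq u^{2/3}$, then $r_0 = u/\sqrt{n} \geq u^{2/3} \geq n$, so $\min\{r_0,n\} = n$ and both terms equal $u\sqrt{n}$, giving $O(u\sqrt{n}\log^2 u)$. Case 2: if $n > u^{2/3}$, then $r_0 = u^{2/3} < n$, so $\min\{r_0,n\} = r_0$ and both terms equal $u^{4/3}$, giving $O(u^{4/3}\log^2 u)$. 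Since $u\sqrt{n} \leq u^{4/3}$ iff $n \leq u^{2/3}$, the attained bound is $O(\min\{\sqrt{n}u, u^{4/3}\} \log^2 u)$ in either case.

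The main obstacle I anticipate is only bookkeeping: verifying that the switchover in $\min\{r_0,n\}$ aligns with the switchover in which bound dominates, and confirming that the combining cost and the initial partitioning cost ($O(n \log n)$ with $n = O(u)$) are both subsumed by the target bound. Everything else follows directly from the machinery already established.
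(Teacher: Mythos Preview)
Your approach is exactly the paper's: partition via \lemref{lem:partition}, compute the pieces with \lemref{generate:intervals}, and fold the $O(\log u)$ results together with \lemref{o:plus} at total cost $O(u\log^2 u)$. The paper simply tries both choices $r_0=u^{2/3}$ and $r_0=u/\sqrt{n}$ separately and takes the better bound, whereas you attempt to package them into a single formula.

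One slip: your formula $r_0=\min\{u/\sqrt{n},\,u^{2/3}\}$ is inverted. When $n\leq u^{2/3}$ you have $u/\sqrt{n}\geq u^{2/3}$, so the $\min$ would return $u^{2/3}$, not $u/\sqrt{n}$ as your Case~1 uses; similarly in Case~2. Your subsequent case analysis is actually consistent with $r_0=\max\{u/\sqrt{n},\,u^{2/3}\}$, which is the correct choice, and with that correction everything goes through. (Alternatively, avoid the unified formula and just argue, as the paper does, that each choice of $r_0$ yields one of the two bounds and then take the minimum of the bounds.)
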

\begin{proof}
Assuming the partition of \lemref{lem:partition}, compute the subset sums $T_i = \SumX{S_i}$, for $i=0,\ldots, \nu$. Let $P_1 = T_1$, and let $P_i = (P_{i-1} \oplus T_i) \cap \IntRange{u}$. Each $P_i$ can be computed using the algorithm of \lemref{o:plus}. Do this for $i=1, \ldots, \nu$, and observe that the running time to compute $P_\nu$, given all $T_i$, is $O( \nu ( u \log u )) = O(u \log^2 u) $.

Finally, for all $i=1,\ldots, \nu$ calculating the $T_i$'s:
\begin{compactitem}
    \item By setting $r_0$ equal to $u^{2/3}$ and using \lemref{generate:intervals} takes $O\left(u^{4/3}\log^2 u\right)$. 
    \item By setting $r_0$ equal to $\frac{u}{\sqrt{n}}$ and using 
\lemref{generate:intervals} takes $O\left(\sqrt{n}u\log^2 u\right)$.
\end{compactitem}
Taking the minimum of these two, proves the theorem.~
\end{proof}

Putting together \thmref{stupid2}, \thmref{theorem:main} and \lemref{onlysets}, results in the following when the input is a \emph{multiset}.

\begin{theorem}[Main theorem]
    \thmlab{distincttheorem}
    Let $S \subseteq \IntRange{u}$ be a \emph{multiset} of $n'$ distinct elements, with total sum $\sigma$, computing the set of all subset sums $\SumX{S}$ takes 
    \[ O\!\left(\min\left\{\sqrt{n'}\,u\log^{\frac{5}{2}} u ,u^{\frac{4}{3}}\log^2 u, \sigma\log \sigma \log \left(n'\log u\right)\right\} \right) \]
    time.
\end{theorem}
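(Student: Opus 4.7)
The plan is to derive each of the three bounds in the minimum separately, using a different combination of the earlier results for each, and then take their minimum.

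For the $\sqrt{n'}\,u$ and $u^{4/3}$ terms, I would feed the set algorithm of \thmref{theorem:main} into \lemref{onlysets}. The set-algorithm's running time $\mathrm{T}(n,u) = O(\min\{\sqrt{n}\,u,\,u^{4/3}\}\log^2 u)$ is $\Omega(u\log^2 u)$, satisfying the precondition of \lemref{onlysets}. Substituting $n \leftarrow n'\log u$ then yields
\[ O\!\left(\min\left\{\sqrt{n'\log u}\,u\log^2 u,\;u^{4/3}\log^2 u\right\}\right) = O\!\left(\min\left\{\sqrt{n'}\,u\log^{5/2} u,\;u^{4/3}\log^2 u\right\}\right), \]
which accounts for the first two terms of the claimed bound.

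For the $\sigma$ term, \lemref{onlysets} cannot be invoked with \thmref{stupid2}, since the bound $O(\sigma\log\sigma\log n)$ need not be $\Omega(u\log^2 u)$ when $\sigma \ll u$. Instead, I would apply \lemref{sparsify} explicitly to obtain a multiset $T$ with $\cardX{T} = O(n'\log u)$, no element of multiplicity exceeding two, and $\SumSetX{T} \leq \sigma$ (the doubling transformation preserves the total of retained elements and only ever discards elements exceeding $u$). Because every element of $T$ has multiplicity at most two, I would partition $T$ into two disjoint \emph{sets} $P$ and $Q$, each of size $O(n'\log u)$ and sum at most $\sigma$. Applying \thmref{stupid2} to each separately computes $\SSumX{P}$ and $\SSumX{Q}$ in combined time $O(\sigma\log\sigma\log(n'\log u))$, and then \lemref{o:plus} computes $(\SSumX{P}\oplus \SSumX{Q})\cap \IntRange{u} = \SumX{S}$ in $O(\sigma\log\sigma)$ additional time.

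Taking the minimum of the three bounds gives the theorem. There is no genuine obstacle here, just one subtle bookkeeping point: the $\sigma$-branch must sidestep \lemref{onlysets} because \thmref{stupid2}'s running time is governed by the total sum rather than by $u$, so we handle the sparsification and the final \FFT combination by hand rather than through the generic reduction.
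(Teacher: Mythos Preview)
Your proof is correct and takes the same approach as the paper, which simply states that the result follows by combining \thmref{stupid2}, \thmref{theorem:main}, and \lemref{onlysets}. You are actually more careful than the paper's one-line proof: the observation that the $\sigma$-branch cannot go through \lemref{onlysets} directly (since \thmref{stupid2} is parameterized by $\sigma$ rather than $u$) and must instead invoke \lemref{sparsify} by hand is a genuine subtlety the paper glosses over.
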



%
%
\section{Subset sums for finite cyclic groups}
\label{sec:cycl}

In this section, we demonstrate the robustness of the idea underlying the algorithm of Section \ref{sec:section2} by showing how to extend it to work for finite cyclic groups. The challenge is that the previous algorithm throws away many sums that fall outside of $\IntRange{u}$ during its execution, but this can no longer be done for finite cyclic groups, since these sums stay in the group and as such must be accounted for.

\subsection{Notations.}
For any positive integer $m$, the set of integers \emph{modulo} $m$ with the operation of addition forms a finite \emphi{cyclic group}, the group $\Z_m=\{0,1,\ldots, m-1\}$ of \emph{order} $m$. Every finite cyclic group of order $m$ is isomorphic to the group $\Z_m$ (as such it is sufficient for our purposes to work with $\Z_m$). Let $\U{\Z_m} = \{x\in \Z_m \:|\: \gcd(x,m) = 1\}$ be the \emphi{set of units} of $\Z_m$, and let \emph{Euler's totient} function $\varphi(m)=|\U{\Z_m}|$ be the \emphi{number of units} of $\Z_m$. We remind the reader that two integers $\alpha$ and $\beta$ such that $\gcd(\alpha, \beta)=1$ are  \emph{coprime} (or relatively prime). The set 
\[\seg{x}{\ell}=\bigr\{x,2x,\ldots,\ell x\bigl\}\]
is a finite arithmetic progression, henceforth referred to as a \emphi{\segment} of \emphi{\length} $\left|\seg{x}{\ell}\right|=\ell$. Finally, let $S/x = \{s/x \:|\:s\in S \mbox{ and } x\divides s \}$ and $S\%x = \{s \in S \:|\: x\notdivides s \}$, where $x\divides s$ and $x\notdivides s$ denote that ``$s$ divides $q$'' and ``$s$ does not divide $q$'', respectively. For an integer $x$, let $\sigma_0(x)$ denote the \emphi{number of divisors} of $x$ and $\sigma_1(x)$ the \emphi{sum of its divisors}. 

\subsection{Subset sums and segments.}

\begin{lemma}
    \lemlab{stupidcyclic}%
    For a set $S \subseteq \Z_m$ of size $n$, such that $S\subseteq \seg{x}{\ell}$, the set $\SSumX{S}$ can be computed in $O\left(n\ell \log (n\ell) \log n\right)$ time.
\end{lemma}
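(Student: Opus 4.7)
The plan is to reduce to the integer subset-sum setting by scaling. Since every element of $S$ is a multiple of $x$, form the set $S' = S/x = \{s/x \mid s \in S\} \subseteq \IntRange{\ell}$, which has size $n$ (the map $s \mapsto s/x$ is a bijection on $S$ since $x \neq 0$). Then apply \lemref{stupid} to $S'$: because $S' \subseteq \IntRange{\ell}$ and $|S'| = n$, we obtain the set of integer subset sums $\SSumX{S'}$ in $O\bigl(n\ell \log(n\ell)\log n\bigr)$ time. Note that $\SSumX{S'} \subseteq \IntRange{n\ell}$, so there are at most $n\ell$ elements to store.

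Next I would recover $\SSumX{S}$ in $\Z_m$ by scaling back. For any $T \subseteq S$, writing $T/x = \{s/x \mid s \in T\}$, we have
\[
\SumSetX{T} \;\equiv\; x \cdot \SumSetX{T/x} \pmod{m},
\]
so $\SSumX{S} = \bigl\{ (x\sigma) \bmod m \,\bigl|\, \sigma \in \SSumX{S'} \bigr\}$. Producing this set is a single pass over $\SSumX{S'}$, doing one multiplication and one modular reduction per element, for a total cost of $O(|\SSumX{S'}|) = O(n\ell)$. Combined with the bound from \lemref{stupid}, the overall running time is $O\bigl(n\ell \log(n\ell) \log n\bigr)$, as claimed.

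There is essentially no hard step here; the main thing to verify is that the scaling map $\sigma \mapsto x\sigma \bmod m$ correctly transports integer subset sums of $S'$ to subset sums of $S$ inside $\Z_m$, which follows directly from distributivity of the modular scalar multiplication. The only mild care needed is to treat $\SSumX{S')$ as a set (not a multiset) so that post-scaling duplicates arising from distinct integer sums collapsing to the same residue modulo $m$ are handled correctly by taking the union; this does not affect the asymptotic bound.
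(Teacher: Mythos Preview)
Your proof is correct and follows essentially the same approach as the paper's: divide out by $x$ to land in $\IntRange{\ell}$, apply \lemref{stupid} over the integers, then scale back modulo $m$ in linear time. One small remark: the bijectivity of $s \mapsto s/x$ on $S$ follows from the segment $\seg{x}{\ell}$ having $\ell$ distinct elements rather than merely from $x \neq 0$, since in $\Z_m$ a nonzero $x$ need not be a unit.
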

\begin{proof}
All elements of $\seg{x}{\ell}$ are multiplicities of $x$, and thus $S' := S/x \subseteq \IntRange{\ell}$ is a well defined set of integers. Next, compute $\SSumX{S'}$ in $O(n\ell \log (n\ell)\log n)$ time using the algorithm of \lemref{stupid} (over the integers). Finally, compute the set $\left\{\sigma x \pmod m \:|\: \sigma\in \SSumX{S'}\right\} = \SSumX{S}$ in linear time.
\end{proof}

\begin{lemma}
    \lemlab{cycliccombine}
    Let $S \subseteq \Z_m$ be a set of size $n$ covered by segments $\seg{x_1}{\ell}, \ldots, \seg{x_k}{\ell}$, formally $S\subseteq \bigcup_{i=1}^k\seg{x_i}{\ell}$, then the set $\SSumX{S}$ can be computed in $O(k m\log m + n\ell \log (n\ell) \log n)$ time. 
\end{lemma}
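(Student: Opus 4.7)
The plan is to first partition $S$ into disjoint subsets, one per segment, invoke \lemref{stupidcyclic} on each piece, and then combine the resulting subset-sum sets using cyclic FFT in $\Z_m$.

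First I would, for each $s\in S$, pick one segment $\seg{x_i}{\ell}$ containing $s$ (breaking ties arbitrarily) and place $s$ into $S_i$. This yields a disjoint partition $S = S_1 \sqcup \cdots \sqcup S_k$ with $|S_i|=n_i$, $\sum_i n_i = n$, and by construction $S_i \subseteq \seg{x_i}{\ell}$. The cost of this bookkeeping is negligible compared to the bounds below.

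Next, for each $i$ I would apply \lemref{stupidcyclic} to compute $T_i = \SSumX{S_i}$ in time $O\bigl(n_i \ell \log(n_i\ell)\log n_i\bigr)$. Summing, using $\log(n_i\ell)\le \log(n\ell)$ and $\log n_i \le \log n$, gives
\[
\sum_{i=1}^k n_i \ell \log(n_i\ell)\log n_i \;\le\; \ell \log(n\ell)\log n \sum_{i=1}^k n_i \;=\; O\bigl(n\ell \log(n\ell)\log n\bigr).
\]

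Finally, since $\{S_i\}$ partitions $S$, one has $\SSumX{S} = T_1 \oplus T_2 \oplus \cdots \oplus T_k$ where $\oplus$ is the Minkowski sum in $\Z_m$ (addition modulo $m$). I would combine iteratively, maintaining $P_j = T_1 \oplus \cdots \oplus T_j$ via the recurrence $P_j = P_{j-1}\oplus T_j$. Each such step is a cyclic convolution of two indicator vectors of length $m$: represent each set by its characteristic polynomial in $\mathbb{Z}[x]/(x^m-1)$ and multiply via FFT (Bluestein's algorithm handles arbitrary $m$), thresholding entries back to $\{0,1\}$. Every convolution takes $O(m\log m)$ time, and $k-1$ of them yield a total of $O(km\log m)$. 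Adding the two stages gives $O(km\log m + n\ell\log(n\ell)\log n)$, as claimed.

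The only subtlety is that combination must be \emph{cyclic} rather than ordinary convolution, since sums exceeding $m-1$ wrap around in $\Z_m$; working in $\mathbb{Z}[x]/(x^m-1)$ addresses this cleanly. Everything else is standard bookkeeping, and no step introduces a bottleneck beyond the two terms in the stated bound.
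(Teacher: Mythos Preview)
Your proposal is correct and follows essentially the same approach as the paper: partition $S$ along the segments, apply \lemref{stupidcyclic} to each piece, and fold the $k$ results together with $k-1$ convolutions at cost $O(m\log m)$ each. The only notable difference is that you are explicit about needing cyclic convolution in $\mathbb{Z}[x]/(x^m-1)$, whereas the paper simply invokes \lemref{o:plus}; your extra care here is appropriate, since \lemref{o:plus} as stated is for ordinary (non-cyclic) sums, but the reduction modulo $m$ does not affect the $O(m\log m)$ bound.
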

\begin{proof}
Partition, in $O(kn)$ time, the elements of $S$ into $k$ sets $S_1,\ldots,S_k$, such that $S_i\subseteq \seg{x_i}{\ell}$, for $i \in \IntRange{k}$. Next, compute the subset sums $T_i = \SSumX{S_i}$ using the algorithm of \lemref{stupidcyclic}, for $i \in \IntRange{k}$. Then, compute $T_1 \oplus T_2 \oplus \ldots \oplus T_k = \SSumX{S}$, by $k-1$ applications of \lemref{o:plus}. The resulting running time is $O\bigl((k-1) m\log m + \sum_{i} |S_i|\ell \log (|S_i|\ell) \log |S_i|\bigr) = O(km\log m + n\ell \log (n\ell) \log n)$.
\end{proof}

\subsection{Covering a subset of $\U{\Z_m}$ by segments.}
Somewhat surprisingly, one can always find a short but ``heavy'' segment.

\begin{lemma}
    \lemlab{largecover}
    Let $S\subseteq U= \U{\Z_m}$, there exists a constant $c$, for any $\ell$ such that $c 2^{\frac{\ln m}{\ln \ln m}}\leq \ell\leq m$ there exists an element $x\in U$ such that $|\seg{x}{\ell} \cap S|=\Omega\left(\frac{\ell}{m}\,|S|\right)$.
\end{lemma}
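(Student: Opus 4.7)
The plan is a first-moment averaging argument over a random unit $x\in U$, combined with a M\"obius-inversion estimate for the count of integers in $\IntRange{\ell}$ coprime to $m$.

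First, I would compute $\sum_{x\in U}|\seg{x}{\ell}\cap S|$ by swapping orders of summation. Fix $s\in S\subseteq U$; an $x\in U$ contributes iff $ix\equiv s\pmod m$ for some $i\in\IntRange{\ell}$. Since $s,x$ are units, the unique such $i=sx^{-1}\bmod m$ is automatically coprime to $m$, and conversely every such $i\in\IntRange{\ell}$ gives back $x=si^{-1}\in U$. Writing $W(\ell)$ for the number of integers in $\IntRange{\ell}$ coprime to $m$, the contribution of any single $s$ is exactly $W(\ell)$, so the total equals $|S|\,W(\ell)$. Pigeonhole then produces some $x\in U$ with $|\seg{x}{\ell}\cap S|\geq |S|\,W(\ell)/\varphi(m)$.

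Next, I would lower-bound $W(\ell)$ using $W(\ell)=\sum_{d\mid m}\mu(d)\lfloor\ell/d\rfloor$ together with Euler's identity $\sum_{d\mid m}\mu(d)/d=\varphi(m)/m$. Replacing $\lfloor\ell/d\rfloor$ by $\ell/d$ up to a fractional-part error yields
\[
    W(\ell)\;\geq\;\ell\cdot\frac{\varphi(m)}{m}\;-\;2^{\omega(m)},
\]
where $2^{\omega(m)}$ crudely bounds the number of squarefree divisors of $m$. As soon as $\ell\geq 4\cdot 2^{\omega(m)}\cdot m/\varphi(m)$, the main term dominates and $W(\ell)/\varphi(m)=\Omega(\ell/m)$, so the averaging bound becomes $|\seg{x}{\ell}\cap S|=\Omega(\ell|S|/m)$, as desired.

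Finally, I would translate the threshold $4\cdot 2^{\omega(m)}\cdot m/\varphi(m)$ into the stated form $c\cdot 2^{\ln m/\ln\ln m}$. This uses two classical estimates: the elementary bound $\omega(m)=O(\ln m/\ln\ln m)$ (since the first $\omega(m)$ primes all divide $m$, giving $\omega(m)!\leq m$, and Stirling finishes the job), and Mertens' $m/\varphi(m)=O(\ln\ln m)$. Their product is $2^{O(\ln m/\ln\ln m)}$, the slow $\ln\ln m$ factor being absorbed into the exponent, so any sufficiently large constant $c$ suffices.

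The main obstacle is really the sharpness of the M\"obius error $2^{\omega(m)}$: that term drives the threshold and explains the exotic-looking $2^{\ln m/\ln\ln m}$ shape. Beyond first-moment averaging and elementary analytic number theory, no deeper tools (character sums, smooth-number estimates, etc.) seem necessary.
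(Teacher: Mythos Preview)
Your proposal is correct and follows essentially the same approach as the paper: a first-moment averaging over $x\in U$, reducing to counting $W(\ell)=|U\cap\IntRange{\ell}|$, and bounding the latter by $\ell\,\varphi(m)/m-2^{\omega(m)}$ with the threshold coming from $2^{\omega(m)}=2^{O(\ln m/\ln\ln m)}$. The paper's proof cites \cite{Suryanarayana1974} and \cite{MR736719} for the two number-theoretic estimates you derive explicitly, but the structure of the argument is the same.
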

\begin{proof}
Fix a $\beta\in U$. For $i\in U\cap \IntRange{\ell}$ consider the modular equation $ix \equiv \beta \pmod m$, this equation has a unique solution $x\in U$ -- here we are using the property that $i$ and $\beta$ are coprime to $m$. Let $\alpha=|U| / 2m$.
Let $\omega(m)$ be the number of distinct prime factors of $m$, and $\theta(m)=2^{\omega(m)}$ be the number of distinct square-free divisors of $m$. Then $\theta(m)\leq c 2^{\frac{\ln m}{\ln \ln m}}<\alpha\ell$ \cite{MR736719}.
There are at least $2\alpha\ell - \theta(m) \geq \alpha \ell$ elements in $U\cap \IntRange{\ell}$ \cite[Equation (1.4)]{Suryanarayana1974}.

Hence, when $\beta\in U$ is fixed, the number of values of $x$ such that $\beta\in \seg{x}{\ell}$ is at least $\alpha \ell$. Namely, every element of $S\subseteq U$ is covered by at least $\alpha \ell$ segments $\{\seg{x}{\ell} \:|\: x\in U\}$. As such, for a random $x\in U$ the expected number of elements of $S$ that are contained in $\seg{x}{\ell}$ is $\left(|S|\,\alpha \ell \right)/|U| = \frac{\ell}{2m} \, |S|$. Therefore, there must be a choice of $x$ such that $|\seg{x}{\ell} \cap S|$ is larger than the average, implying the claim.
\end{proof}

One can always find a small number of \segments~of length $\ell$ that contain all the elements of $\U{\Z_m}$.

\begin{lemma}
    \lemlab{smallcover}
    Let $S\subseteq \U{\Z_m}$ of size $n$, then for any $\ell$ such that $\ell\geq m^{1/2}$ there is a collection $\mathcal{L}$ of $O(\frac{m}{\ell}\ln n)$ \segments, each of length $\ell$, such that $S \subseteq \bigcup_{x\in \mathcal{L}} \seg{x}{\ell}$. Furthermore, such a cover can be computed in $O\bigl((n+\log m)\,\ell\bigr)$ time.
\end{lemma}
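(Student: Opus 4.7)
The plan is to build $\mathcal{L}$ greedily. Initialize $T := S$ and $\mathcal{L} := \emptyset$; while $T \neq \emptyset$, choose $x \in \U{\Z_m}$ maximizing $|\seg{x}{\ell} \cap T|$, append $x$ to $\mathcal{L}$, and delete $\seg{x}{\ell} \cap T$ from $T$. Because $T$ remains a subset of $\U{\Z_m}$ throughout, \lemref{largecover} applies at every iteration and guarantees a segment covering an $\Omega(\ell/m)$ fraction of the current $T$. Hence $|T|$ shrinks by a factor $1 - \Omega(\ell/m)$ per step and becomes empty after $O((m/\ell)\ln n)$ iterations, matching the claimed bound on $|\mathcal{L}|$.

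For the implementation, note that $t \in \seg{x}{\ell}$ iff $x \equiv t \cdot i^{-1} \pmod{m}$ for some $i \in \U{\Z_m}\cap \IntRange{\ell}$. First enumerate $\U{\Z_m}\cap \IntRange{\ell}$ and precompute the modular inverses of these at most $\ell$ values via the extended Euclidean algorithm, in $O(\ell \log m)$ time. Then for each element $t \in S$ and each valid $i$, form $x = t \cdot i^{-1} \bmod m$ and insert $t$ into a bucket indexed by $x$; this produces $O(n\ell)$ pairs which we group by $x$ using radix sort (or a suitable dictionary over the $O(n\ell)$ distinct $x$-values that actually appear) in $O(n\ell)$ time. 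Let $c(x)$ denote the current bucket size.

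To drive the greedy loop, attach a bucket-queue keyed by the counter value (which lies in $[0,n]$): since the running maximum of $c(\cdot)$ is monotonically non-increasing, a single top pointer that only walks downwards yields $O(1)$ amortized extract-max and decrement-key. When $x^{\ast}$ is extracted, iterate through its bucket, remove each element from $T$, and for each of its $O(\ell)$ other buckets decrement the counter accordingly. Every element is removed at most once, so the total update cost is $O(n\ell)$, giving an overall running time of $O\bigl((n+\log m)\ell\bigr)$.

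The main obstacle is verifying that the hypothesis of \lemref{largecover}, namely $c \cdot 2^{\ln m/\ln\ln m} \leq \ell$, is implied by $\ell \geq m^{1/2}$. Since $2^{\ln m/\ln\ln m} = m^{(\ln 2)/\ln\ln m} = m^{o(1)}$, the bound $\ell \geq m^{1/2}$ dominates for all sufficiently large $m$; the finitely many small cases can be absorbed into the constant by covering $\U{\Z_m}$ with a trivial family of $O(1)$ segments. A secondary subtlety is ensuring the bucket-queue maintenance truly costs $O(1)$ amortized, which follows from the monotonicity of the top pointer.
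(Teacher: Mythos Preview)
Your proposal is correct and follows essentially the same route as the paper: the greedy set-cover argument driven by \lemref{largecover} for the $O((m/\ell)\ln n)$ bound, and the implementation via precomputing the inverses of $\IntRange{\ell}$ and bucketing the $O(n\ell)$ incidence pairs. Your write-up is slightly more detailed than the paper's in two places---you spell out the bucket-queue mechanics where the paper simply invokes the linear-time greedy set-cover routine from \cite[Section 35.3]{cormen2014introduction}, and you explicitly check that $\ell\geq m^{1/2}$ meets the lower-bound hypothesis of \lemref{largecover}, which the paper leaves implicit---but the argument is the same.
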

\begin{proof}
Consider the set system defined by the ground set $\Z_m$ and the sets $\{\seg{x}{\ell} \:|\: x\in \U{\Z_m}\}$. Next, consider the standard greedy set cover algorithm \cite{Johnson:1973:AAC:800125.804034,STEIN1974391,Lovasz:1975:ROI:2625515.2625652}: Pick a \segment~$\seg{x}{\ell}$ such that $|\seg{x}{\ell}\cap S|$ is maximized, remove all elements of $S$ covered by $\seg{x}{\ell}$, add $\seg{x}{\ell}$ to the cover, and repeat. By \lemref{largecover}, there is a choice of $x$ such that the segment $\seg{x}{\ell}$ contains at least a $c\ell/m$ fraction of $S$, for some constant $c$. After $m/c\ell$ iterations of this process, there will be at most $\left(1- c\ell/m\right)^{m/c\ell} n \leq n/e$ elements remaining. As such, after $O(\frac{m}{\ell}\ln n)$ iterations the original set $S$ is covered.

To implement this efficiently, in the preprocessing stage compute the modular inverses of every element in $\IntRange{\ell}$ using the extended Euclidean algorithm, in $O(\ell \log m)$ time \cite[Section 31.2]{cormen2014introduction}. Then, for every $b\in S$ and every $i \in \IntRange{\ell}$, find the unique $x$ (if it exists) such that $ix\equiv b \pmod m$, using the inverse $i^{-1}$ in $O(1)$ time. This indicates that $b$ is in $\seg{x}{\ell}\cap S$. Now, the algorithm computes $\seg{x}{\ell}\cap S$, for all $x$, in time $O(n\ell + \ell\log m)$. Next, feed the sets $\seg{x}{\ell}\cap S$, for all $x$, to a linear time greedy set cover algorithm and return the desired segments in $O(n\ell)$ time \cite[Section 35.3]{cormen2014introduction}. The total running time is $O\bigl((n+\log m)\,\ell\bigr)$.
\end{proof}

\subsection{Subset sums when all numbers are coprime to $m$.}
\label{setofunits}

\begin{lemma}
    \lemlab{unitcyclic}
    Let $S\subseteq \U{\Z_m}$ be a set of size $n$. Computing the set of all subset sums $\SSumX{S}$ takes $O\left(\min\left\{\sqrt{n}m, m^{5/4}\right\}\log m \log n\right)$ time. 
\end{lemma}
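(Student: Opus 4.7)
The plan is to combine three earlier tools: \lemref{smallcover} to cover $S$ by a few short segments, the integer algorithm of \thmref{theorem:main} to compute subset sums within each segment quickly (after dividing out the segment's step), and \lemref{o:plus} applied modulo $m$ to stitch the partial sum-sets back together. The two claimed bounds $\sqrt{n}\,m$ and $m^{5/4}$ will emerge from two different optimal choices of the segment length $\ell$.

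First, for a parameter $\ell\geq \sqrt{m}$ to be fixed later, I would apply \lemref{smallcover} to obtain $k = O((m/\ell)\log n)$ segments $\seg{x_1}{\ell},\ldots,\seg{x_k}{\ell}$ that cover $S$, and partition $S$ into disjoint pieces $S_i\subseteq \seg{x_i}{\ell}$ with $n_i=|S_i|$. Because $x_i$ divides every element of $S_i$, the set $S_i' = S_i/x_i$ is a well-defined subset of $\IntRange{\ell}$ in the integers, and $\SSumX{S_i}$ is recovered from $\SSumX{S_i'}$ by multiplying each sum by $x_i$ and reducing modulo $m$. I would then compute every $\SSumX{S_i'}$ using \thmref{theorem:main}, and merge the $k$ resulting modular sum-sets via $k-1$ successive applications of (the modular form of) \lemref{o:plus}, each costing $O(m\log m)$.

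The merge step therefore costs $O(km\log m) = O((m^2/\ell)\log m\log n)$. To hit $\widetilde{O}(m^{5/4})$ I would use the $u^{4/3}$ branch of \thmref{theorem:main}, giving each segment an $O(\ell^{4/3}\log^2 \ell)$ contribution independent of $n_i$; the total inner cost becomes $O(k\ell^{4/3}\log^2\ell) = O(m\ell^{1/3}\log n\log^2\ell)$, and $\ell = m^{3/4}$ balances the two terms at $\widetilde{O}(m^{5/4})$. To hit $\widetilde{O}(\sqrt{n}\,m)$ I would instead use the $O(n_i\ell\log(n_i\ell)\log n_i)$ per-segment bound supplied by \lemref{stupidcyclic}, which sums to $O(n\ell\log(n\ell)\log n)$ because the $S_i$ partition $S$; setting $\ell = m/\sqrt{n}$ then balances against $m^2/\ell$ and yields $\widetilde{O}(\sqrt{n}\,m)$.

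The main (minor) technicality I expect to address is verifying that both choices respect the constraint $\ell\geq \sqrt{m}$ required by \lemref{smallcover}: for $\ell=m^{3/4}$ this is immediate, and for $\ell=m/\sqrt{n}$ it follows from $n\leq \varphi(m)\leq m$. After that the argument is essentially routine bookkeeping of polylogarithmic factors: run the algorithm with both choices of $\ell$ in parallel and return the faster answer, giving the claimed $O(\min\{\sqrt{n}\,m,m^{5/4}\}\log m\log n)$ running time.
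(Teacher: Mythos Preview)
Your $\sqrt{n}\,m$ branch is fine and matches the paper's argument exactly: take $\ell=m/\sqrt{n}$, cover via \lemref{smallcover}, process each segment via \lemref{stupidcyclic}, and merge. The problem is your separate route to the $m^{5/4}$ bound via $\ell=m^{3/4}$. Two things go wrong there. First (and fatally), you did not account for the running time of \lemref{smallcover} itself: it costs $O\bigl((n+\log m)\,\ell\bigr)=O(n\,m^{3/4})$, which exceeds $m^{5/4}$ precisely in the regime $n>\sqrt{m}$ where you need the $m^{5/4}$ bound to beat $\sqrt{n}\,m$. For, say, $n=m^{0.6}$ this is $m^{1.35}$, a polynomial loss that the ``run both in parallel'' trick cannot hide, because your other branch then gives $\sqrt{n}\,m=m^{1.3}$, also above $m^{5/4}$. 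Second (minor), invoking the $u^{4/3}$ branch of \thmref{theorem:main} per segment carries a $\log^2$ factor, so even ignoring the cover cost you would land at $O(m^{5/4}\log^2 m\log n)$ rather than the stated $O(m^{5/4}\log m\log n)$.

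The paper sidesteps all of this with a number-theoretic fact you are missing: if $S\subseteq\U{\Z_m}$ and $|S|\geq 2\sqrt{m}$, then $\SSumX{S}=\Z_m$ outright (Hamidoune--Llad\'o--Serra). Hence one only ever runs the algorithm when $n<2\sqrt{m}$, and then $\sqrt{n}\,m=O(m^{5/4})$ automatically, so a single choice $\ell=m/\sqrt{n}$ suffices and yields both terms of the $\min$ with the exact polylog stated. There is no second algorithm and no need for the $u^{4/3}$ machinery here.
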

\begin{proof}
If $|S| \geq 2\sqrt{m}$, then $\SSumX{S}=\Z_m$ \cite[Theorem~1.1]{Hamidoune20081279}. As such, the case where $n = |S| \geq 2\sqrt{m}$ is immediate.

For the case that $n < 2\sqrt{m}$ we do the following. Apply the algorithm of \lemref{smallcover} for $\ell = m/\sqrt{n} \geq m^{1/2}$. This results in a cover of $S$ by $O(\frac{m}{\ell} \log n)$ segments (each of length $\ell$), which takes $O\bigl((n+\log m)\,\ell\bigr)= O(\sqrt{n}m \log m)$ time. Next, apply the algorithm of \lemref{cycliccombine} to compute $\SSumX{S}$ in $O(n\ell \log (n\ell)\log n)=O(\sqrt{n}m\log m\log n)$ time. Since, $n=O(\sqrt{m})$ this running time is $O\!\left(\min\left\{\sqrt{n}m, m^{5/4}\right\}\log m \log n\right)$.
\end{proof}

\subsection{The algorithm: Input is a subset of $\Z_m$.}
In this section, we show how to tackle the general case when $S$ is a subset of $\Z_m$. 

\subsubsection{Algorithm.}
\label{generalcyclicalgo}
The input instance is a triple $(\Gamma, \mu, \tau)$, where $\Gamma$ is a set, $\mu$ its \emph{modulus} and $\tau$ an auxiliary parameter. For such an instance $(\Gamma, \mu, \tau)$ the algorithm computes the set of all subset sums of $\Gamma \mbox{ modulo } \mu$. The initial instance is $(S, m, m)$. 

Let $q$ be the smallest prime factor of $\tau$, referred to as \emph{pivot}. Partition $\Gamma$ into the two sets: 
\[ \Gamma/q = \bigl\{s/q \:\big|\: s\in \Gamma \mbox{ and } q\divides s \bigr\} \mbox{ and } \Gamma\%q = \bigl\{s \in \Gamma \:\big|\: q\notdivides s \bigr\}\:. \]
Recursively compute the (partial) subset sums $\SSumX{\Gamma/q}$ and $\SSumX{\Gamma\%q}$, of the instances $(\Gamma/q, \mu/q, \tau/q )$ and $(\Gamma\%q, \mu, \tau/q )$, respectively. Then compute the set of all subset sums $\SSumX{\Gamma}= \bigl\{ qx \:\big|\: x\in \SSumX{\Gamma/q}\bigr\}\oplus \SSumX{\Gamma\%q}$ by combining them together using \lemref{o:plus}. At the bottom of the recursion, when $\tau = 1$, for each set compute its subset sums, using the algorithm of \lemref{unitcyclic}.

\subsubsection{Handling multiplicities.} 
During the execution of the algorithm there is a natural tree formed by the recursion. Consider an instance $(\Gamma, \mu, \tau)$ such that the pivot $q$ divides $\tau$ (and $\mu$) with multiplicity $r$. The top level recursion would generate instances with sets $\Gamma/q$ and $\Gamma\%q$. In the next level, $\Gamma/q$ is partitioned into $\Gamma/q^2$ and $(\Gamma/q)\% q$. On the other side of the recursion $\Gamma\%q$ gets partitioned (naively) into $(\Gamma\%q)/q$ (which is an empty set) and $(\Gamma\%q)\%q = \Gamma\%q$. As such, this is a superfluous step and can be skipped. Hence, compressing the $r$ levels of the recursion for this instance results in $r+1$ instances:
\[ \Gamma\%q,\, (\Gamma/q)\%q,\, \ldots,\, (\Gamma/q^{r-1})\%q,\, \Gamma/q^r \:. \]
The total size of these sets is equal to the size of $\Gamma$. In particular,  compress this subtree into a single level of recursion with the original call having $r+1$ children. At each such level of the tree label the edges by $0, 1, 2, \ldots, r$, based on the multiplicity of the divisor of the resulting (node) instance (i.e., an edge between instance sets $\Gamma$ and $(\Gamma/q^2)\%q$ would be labeled by ``2'').

\subsubsection{Analysis.} 
The recursion tree formed by the execution of the algorithm has a level for each of the $k=O(\log m / \log \log m)$ distinct prime factors of $m$ \cite{MR736719} -- assume the root level is the $0$th level. 

\begin{lemma}
    \lemlab{cyclicAnalysis}
    Consider running the algorithm on input $(S,m,m)$. Then the values of the moduli at the leaves of the recursion tree are \emph{unique}, and are precisely the divisors of $m$.
\end{lemma}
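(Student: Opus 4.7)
The plan is to induct on the levels of the compressed recursion tree using the canonical factorization $m = p_1^{r_1} p_2^{r_2} \cdots p_k^{r_k}$, with the primes listed in increasing order $p_1 < p_2 < \cdots < p_k$. I will maintain the following invariant at level $j$: the third coordinate $\tau$ of every node at this level equals $m/(p_1^{r_1} \cdots p_j^{r_j})$, and the set of moduli at this level is precisely
\[ \left\{ m \Big/ \prod_{\ell=1}^{j} p_\ell^{a_\ell} \;:\; 0 \leq a_\ell \leq r_\ell \right\}, \]
with each such modulus realized by exactly one node. The base case $j=0$ is just the root $(S, m, m)$, which trivially satisfies the invariant (the only tuple is the empty one).

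For the inductive step, pick any level-$j$ node $(\Gamma, \mu, \tau)$. By the invariant, $\tau = m/(p_1^{r_1} \cdots p_j^{r_j})$, whose smallest prime factor is $p_{j+1}$ with multiplicity $r_{j+1}$. The compression step of Section \ref{generalcyclicalgo} therefore spawns $r_{j+1} + 1$ children, with moduli $\mu,\, \mu/p_{j+1},\, \ldots,\, \mu/p_{j+1}^{r_{j+1}}$, each sharing the new parameter $\tau/p_{j+1}^{r_{j+1}} = m/(p_1^{r_1} \cdots p_{j+1}^{r_{j+1}})$. Extending the parent's tuple $(a_1, \ldots, a_j)$ by $a_{j+1} \in \{0, 1, \ldots, r_{j+1}\}$ thus hits every tuple for level $j+1$ exactly once, carrying the invariant forward.

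At level $k$ every node has $\tau = 1$, triggering the base-case branch of the algorithm, so these are exactly the leaves. Their moduli run over $\{m/\prod_{\ell=1}^{k} p_\ell^{a_\ell} : 0 \leq a_\ell \leq r_\ell\}$, which by the fundamental theorem of arithmetic is in bijection with the divisors of $m$ via $d \mapsto m/d$. This yields both the uniqueness claim and the identification with the divisor set.

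The main subtlety to verify is that the multiplicity of the pivot $p_{j+1}$ is the same in $\tau$ and in the current modulus $\mu$ at a level-$j$ node, which is what justifies the description of the $r_{j+1}+1$ compressed children with moduli $\mu/p_{j+1}^i$ for $i = 0, \ldots, r_{j+1}$. This equality holds because the only prime powers that have ever been divided out of $m$ along the path to this node are powers of $p_1, \ldots, p_j$, all coprime to $p_{j+1}$, so $p_{j+1}$ retains its full multiplicity $r_{j+1}$ in $\mu$ just as it does in $\tau$.
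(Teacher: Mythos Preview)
Your proof is correct and follows essentially the same approach as the paper's: both identify each leaf with a tuple $(a_1,\ldots,a_k)$ of exponents, argue that the leaf's modulus is $m/\prod p_\ell^{a_\ell}$, and invoke unique factorization to get uniqueness and surjectivity onto the divisors. The paper phrases this as a direct bijection between tuples and root-to-leaf paths via the edge labels, while you unfold the same argument as a level-by-level induction; your version is slightly more explicit in verifying that the pivot $p_{j+1}$ has the same multiplicity in $\mu$ as in $\tau$, which the paper states parenthetically without proof.
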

\begin{proof}
Let $m=\prod_{i=1}^k q_i^{r_i}$ be the prime factorization of $m$, where $q_i<q_{i+1}$ for all $1\leq i<k$. Then every vector $\bm{x}=(x_1,\ldots,x_k)$, with $0\leq x_i\leq r_i$, defines a path from the root to a leaf of modulus $m/\prod_{i=1}^k q_i^{x_i}$ in the natural way: Starting at the root, at each level of the tree follow the edge labeled $x_i$. If for two vectors $\bm{x}$ and $\bm{y}$ there is an $i \in \IntRange{k}$ such that $x_i \neq y_i$, then the two paths they define will be different (starting at the $i$th level). And, by the unique factorization of integers, the values of the moduli at the two leaves will also be different. Finally, note that every divisor of $m$, $\prod_{i=1}^k q_i^{\rho_i} \mbox{ with } 0\leq\rho_i\leq r_i$, occurs as a modulus of a leaf, and can be reached by following the path $(r_1-\rho_1, \ldots, r_k-\rho_k)$ down the tree.~
\end{proof}

\begin{theorem}
     Let $S\subseteq \Z_m$ be a set of size $n$. Computing the set of all subset sums $\SSumX{S}$ takes $O\bigl(\min\left\{\sqrt{n}m,m^{5/4}\right\}\log^2 m\bigr)$ time. 
\end{theorem}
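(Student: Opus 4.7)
The plan is to analyze the recursive algorithm of Section~\ref{generalcyclicalgo} by separately bounding the total cost at the leaves of the recursion tree and the combining cost at its internal nodes. First, I would invoke \lemref{cyclicAnalysis}: the leaves are in bijection with the divisors of $m$, and at the leaf of modulus $d$ the input is a subset of $\U{\Z_d}$, since every prime factor of $d$ was sent down a ``$\%$'' branch at some ancestor. Letting $n_d$ denote the size of the input at that leaf, the recursion partitions $S$ across the leaves so that $\sum_{d\mid m} n_d = n$.

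Next, I would bound the total leaf cost. Applying \lemref{unitcyclic} at the leaf of modulus $d$ takes $O\bigl(\min\{\sqrt{n_d}\,d,\,d^{5/4}\}\log d\log n_d\bigr) = O\bigl(\min\{\sqrt{n_d}\,d,\,d^{5/4}\}\log^2 m\bigr)$ time. Using $\sum_d \min(a_d, b_d) \leq \min(\sum_d a_d, \sum_d b_d)$, it suffices to bound the two sums separately. Since $d \mapsto m/d$ is a bijection on divisors,
\[
\sum_{d\mid m} d^{5/4} \;=\; m^{5/4}\sum_{d\mid m}(1/d)^{5/4} \;\leq\; \zeta(5/4)\,m^{5/4} \;=\; O(m^{5/4}),
\]
and by Cauchy--Schwarz with $\sum_{d\mid m} d^2 \leq \zeta(2)\,m^2$,
\[
\sum_{d\mid m} \sqrt{n_d}\,d \;\leq\; \sqrt{\textstyle\sum_d n_d}\,\sqrt{\textstyle\sum_d d^2} \;=\; O(\sqrt{n}\,m).
\]
Combining these bounds, the total leaf cost is $O\bigl(\min\{\sqrt{n}\,m,\, m^{5/4}\}\log^2 m\bigr)$.

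Then I would argue the combining cost is a lower-order term. At a compressed internal node with modulus $\mu$ and pivot $q$ of multiplicity $r$, the $r+1$ child subset sums (with $X_i \subseteq \Z_{\mu/q^i}$) are lifted via $x \mapsto q^i x \bmod \mu$ and combined by $r$ applications of \lemref{o:plus}, at total cost $O(r\mu\log\mu)$. Grouping nodes by depth, where depth $j-1$ is processed using pivot $q_j$ of multiplicity $r_j$ in $m$, the sum of the moduli across depth $j-1$ equals $m\prod_{i<j}(1 + 1/q_i + \cdots + 1/q_i^{r_i}) \leq m\prod_{p\mid m} p/(p-1) = O(m\log\log m)$ by Mertens' theorem. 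Summing $O(r_j m\log m\log\log m)$ over the $k = O(\log m/\log\log m)$ levels, and using $\sum_j r_j = O(\log m)$, yields a total combining cost of $O(m\log^2 m\log\log m)$, which is absorbed by the $m^{5/4}\log^2 m$ leaf bound.

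The main obstacle is the number-theoretic bookkeeping: one must verify the convergence $\sum_{d\mid m} d^{-s} = O(1)$ for $s>1$ (used in both leaf bounds) and the Mertens-type estimate $\prod_{p\mid m} p/(p-1) = O(\log\log m)$ (used to keep the combining cost sub-$m^{5/4}$). Once these are in place, the theorem follows by adding the leaf and combining contributions.
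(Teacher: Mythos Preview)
Your overall structure matches the paper's: split the cost into leaves and internal nodes, use \lemref{cyclicAnalysis} to identify the leaf moduli with the divisors of $m$, bound the leaf cost via Cauchy--Schwarz (for $\sqrt{n}\,m$) and a convergent divisor sum (for $m^{5/4}$), and bound the combining cost separately. Your leaf bookkeeping, using $\sum_{d\mid m} d^{-s}\le\zeta(s)$ directly, is a clean variant of the paper's argument, which instead ranks the divisors and uses $\mu_i\le m/i$; the two are equivalent.

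There is one genuine slip in the internal-node analysis. You lift all $r+1$ children to $\Z_\mu$ first and then do $r$ applications of \lemref{o:plus} over $\Z_\mu$, for a per-node cost of $O(r\mu\log\mu)$. Summing this as you do yields $O(m\log^2 m\log\log m)$ for the combining cost. You then say this is ``absorbed by the $m^{5/4}\log^2 m$ leaf bound,'' which is true, but it is \emph{not} absorbed by the $\sqrt{n}\,m\log^2 m$ bound when $n<(\log\log m)^2$. So as written your argument does not establish the stated $O(\sqrt{n}\,m\log^2 m)$ bound for very small $n$.

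The paper avoids this by not lifting everything to $\Z_\mu$ up front. Instead it merges the children from the smallest modulus to the largest: combine the two with moduli $\mu/q^r$ and $\mu/q^{r-1}$ (working in $\Z_{\mu/q^{r-1}}$), then combine the result with the $\mu/q^{r-2}$ child (working in $\Z_{\mu/q^{r-2}}$), and so on. The per-node cost becomes the geometric sum $O\bigl(\sum_{i=1}^r(\mu/q^i)\log(\mu/q^i)\bigr)=O(\mu\log\mu)$, independent of $r$. With that, the total combining cost drops to $O(m\log^2 m)$, which \emph{is} dominated by $\sqrt{n}\,m\log^2 m$ for all $n\ge 1$, and the theorem follows. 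Swapping in this merging order (and dropping the $\sum_j r_j$ factor) fixes your proof.
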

\begin{proof}
The algorithm is described in Section \ref{generalcyclicalgo}, when the input is $(S,m,m)$. We break down the running time analysis into two parts: The running time at the leaves, and the running time at internal nodes. 

Let $\delta$ be the number of leaves of the recursion tree. Arrange them so the modulus of the $i$th leaf, $\mu_i$, is the $i$th largest divisor of $m$. Note that $\mu_i$ is at most $m/i$, for all $i \in \IntRange{\delta}$. Using \lemref{unitcyclic}, the running time is bounded by 
\begin{align*}
&O\!\left(\sum_{i=1}^{\delta} \min\left\{\sqrt{n_i}\,\mu_i,\mu_i^{5/4}\right\}\, \log n_i \log \mu_i\right) = O\!\left( \log m \log n \sum_{i=1}^{\delta} \min\left\{\sqrt{n_i}\,\frac{m}{i},\left(\frac{m}{i}\right)^{5/4}\right\}\right)\:.
\end{align*}
Using Cauchy-Schwartz, the first sum of the $\min$ is bounded by  
\[ m\sum_{i=1}^{\delta} \frac{\sqrt{n_i}}{i} \leq m\sqrt{\left(\sum_{i=1}^{\delta} \left(\sqrt{n_i}\right)^2 \right)\!\!\left(\sum_{i=1}^{\delta} \frac{1}{i^2}\right)} = O\!\left(\sqrt{n}m\right)\,, \]
and the second by $O(m^{5/4})$. Putting it all together, the total work done at the leaves is $O\left(\min\bigl\{\sqrt{n}m,m^{5/4}\bigr\}\log m \log n\right)$.

Next, consider an internal node of modulus $\mu$, pivot $q$ and $r+1$ children. The algorithm combines these instances, by applying $r$ times \lemref{o:plus}. The total running time necessary for this process is described next. As the moduli of the instances decrease geometrically, pair up the two smallest instances, combine them together, and in turn combine the result with the next (third) smallest instance, and so on. This yields a running time of
\[ O\left(\sum_{i=1}^r \frac{\mu}{q^i}\,\log\frac{\mu}{q^i}\right) = O(\mu \log \mu)\:. \]
At the leaf level, by \lemref{cyclicAnalysis}, the sum of the moduli $\sum_{i=1}^\delta \mu_i$ equals to $\sigma_1(m)$, and it is known that $\sigma_1(m)=O(m\log \log m)$ \cite[Theorem 323]{opac-b1099316}. As such, the sum of the moduli of all internal nodes is bounded by $O(k m \log \log m)=O(m\log m)$, as the sum of each level is bounded by the sum at the leaf level, and there are $k$ levels. As each internal node, with modulus $\mu$, takes $O(\mu\log \mu)$ time and $x\log x$ is a convex function, the total running time spent on all internal nodes is $O\bigl(m\log m \log (m \log m)\bigr) = O(m\log^2 m)$.

Aggregating everything together, the complete running time of the algorithm is bounded by $O\bigl(\min\left\{\sqrt{n}m,m^{5/4}\right\}\log^2 m\bigr)$, implying the theorem.
\end{proof}

The results of this section, along with the analysis of the recursion tree above, conclude the following corollary on covering $\Z_m$ with a small number of segments. The result is useful for error correction codes, and improves the recent bound of Chen et al. by a factor of $\sqrt{\ell}$ \cite{Chen2014}.

\begin{corollary}
    There exist a constant $c$, for all $\ell$ such that $c 2^{\frac{\ln m}{\ln \ln m}}\leq \ell \leq m$, 
    one can cover $\Z_m$ with $O\left((\sigma_1(m) \ln m)/\ell\right) + \sigma_0(m)$ segments of length $\ell$. Furthermore, such a cover can be computed in $O(m\ell)$ time. 
\end{corollary}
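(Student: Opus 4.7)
The plan is to run the recursive partitioning of Section~\ref{generalcyclicalgo} on the instance $(\Z_m,m,m)$ and then cover the set at each leaf separately using \lemref{smallcover}, finally lifting each \segment~back into $\Z_m$.

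By \lemref{cyclicAnalysis}, the leaves of the recursion tree are in bijection with the divisors of $m$; writing $\mu_i$ for the modulus at the $i$\th leaf and $d_i = m/\mu_i$ for the corresponding divisor, the leaf holds a set $\Gamma_i \subseteq \U{\Z_{\mu_i}}$, and every $s\in \Z_m$ appears at exactly one leaf in the form $s/d_i$. Building this partition costs $O(m\log m)$ time in total, since each element descends through $O(\log m/\log\log m)$ levels. A \segment~$\seg{x}{\ell}\subseteq \Z_{\mu_i}$ lifts to the \segment~$\seg{d_i x}{\ell}\subseteq \Z_m$: if $s/d_i\equiv jx \pmod{\mu_i}$, multiplying by $d_i$ gives $s\equiv j\,d_i x \pmod m$. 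Hence any \segment~cover of each $\Gamma_i$ inside $\Z_{\mu_i}$ lifts to a \segment~cover of all of $\Z_m$ of the same size.

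For each leaf I proceed as follows: if $\mu_i \leq \ell$, take one \segment, which already covers $\Z_{\mu_i}$, contributing at most $\sigma_0(m)$ \segments~in total; otherwise, \lemref{smallcover} yields a cover of $\Gamma_i$ by $O((\mu_i/\ell)\ln \mu_i)$ \segments. Summing,
\[
\sigma_0(m) \;+\; O\!\pth{\frac{\ln m}{\ell}}\sum_i \mu_i \;=\; \sigma_0(m) \;+\; O\!\pth{\frac{\sigma_1(m)\ln m}{\ell}},
\]
matching the claim, where $\sum_i \mu_i = \sigma_1(m)$ is exactly the sum of divisors. For the running time, \lemref{smallcover} costs $O((|\Gamma_i|+\log \mu_i)\ell)$ per leaf, summing to $O((m+\sigma_0(m)\log m)\ell)=O(m\ell)$; the partition construction and \segment~lifting are lower-order since the hypothesis forces $\ell \geq \log m$.

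The step I expect to require the most care is justifying that \lemref{smallcover} can be applied at each leaf: as stated it assumes $\ell \geq \mu_i^{1/2}$, which is stronger than what the corollary supplies. Inspection of its proof, however, shows that this hypothesis is used only to ensure that \lemref{largecover} supplies a heavy \segment~at each greedy step, and \lemref{largecover}'s own hypothesis is merely $\ell \geq c\,2^{\ln \mu_i/\ln\ln \mu_i}$. Since $x \mapsto \ln x/\ln\ln x$ is eventually monotone increasing and $\mu_i \leq m$, the corollary's lower bound $\ell \geq c\,2^{\ln m/\ln\ln m}$ already forces $\ell \geq c\,2^{\ln \mu_i/\ln\ln \mu_i}$ at every leaf, so the greedy argument inside \lemref{smallcover} goes through as written.
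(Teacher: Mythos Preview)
Your proof is correct and takes essentially the same approach as the paper: partition $\Z_m$ according to $\gcd$ into the pieces $\U{\Z_d}$ for $d\mid m$ (you reach them as the leaves of the recursion tree of Section~\ref{generalcyclicalgo}; the paper writes them down directly as $S_{m/d}=\{x/(m/d):\gcd(x,m)=m/d\}$), cover each piece with \lemref{smallcover}, and lift the \segments~back by multiplying by $m/d$. Your final paragraph, which reconciles the stated hypothesis $\ell\geq\mu_i^{1/2}$ of \lemref{smallcover} with the weaker bound actually available, is a detail the paper's own proof glosses over.
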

\begin{proof}
Let $S_{m/d}=\{ x/(m/d) \:|\: x\in \Z_m \mbox{ and } \gcd(x,m) = m/d \}$, for all $d\divides m$. Note that $S_{m/d} = \U{\Z_d}$, hence by \lemref{smallcover}, each $S_{m/d}$ has a cover of $O((d\ln d)/\ell)$ segments. Next, ``lift'' the segments of each set $S_{m/d}$ back up to $\Z_m$ (by multiplying by $m/d$) forming a cover of $\Z_m$. The number of segments in the final cover is bounded by 
\begin{align*}\sum_{\substack{d|m\\ \ell\leq d}} O\!\left(\frac{d}{\ell}\ln m\right) + \sum_{\substack{d|m\\ \ell>d}} 1 = O\!\left(\frac{\sigma_1(m) \ln m}{\ell}\right)+\sigma_0(m)\:.
\end{align*}
The time to cover each $S_{m/d}$, by \lemref{smallcover}, is $O\bigl((n+\log m)\,\ell\bigr)=O\bigl(\left(\varphi(d)+\log d\right)\ell\bigr)$, since there are $\varphi(d)$ elements in $S_{m/d}$, and $S_{m/d}\subseteq \Z_d$. Also, $\varphi(d)$ dominates $\log d$, as $O\bigl(\varphi(d)\bigr)= \Omega(d/\log \log d)$ \cite[Theorem 328]{opac-b1099316}, therefore the running time simplifies to $O\bigl(\varphi(d)\ell\bigr)$. Summing over all $S_{m/d}$ we have
\[ \sum_{d|m} O\bigl(\varphi(d)\ell\bigr) = O\!\left(\ell \sum_{d|m} \varphi(d)\right) = O(m\ell)\;, \]
since $\sum_{d|m} \varphi(d) = m$ \cite[Sec 16.2]{opac-b1099316}, implying the corollary.
\end{proof}

If $\ell<c 2^{\frac{\ln m}{\ln \ln m}}$, then $\ell=m^{o(1)}$. The corollary above then shows that for all $\ell$, there is a cover of $\Z_m$ with $m^{1+o(1)}/\ell$ segments. 

\section{Recovering the solution}
\label{sec:recover}

Given sets $X$ and $Y$, a number $x$ is a \emphi{witness} for $i\in X\oplus Y$, if $x\in X$ and $i-x\in Y$. A function $w : X\oplus Y\to X$ is a \emphi{witness function}, if $w(i)$ is a witness of $i$. 

If one can find a witness function for each $X\oplus Y$ computation of the algorithm, then we can traceback the recursion tree and reconstruct the subset that sums up to $t$ in $O(n)$ time. The problem of finding a witness function quickly can be reduced to the \emph{reconstruction problem} defined next.
\subsection{Reduction to the reconstruction problem.}

In the reconstruction problem, there are hidden sets $S_1,\ldots,S_n \subseteq \IntRange{m}$ and we have two oracles $\textsc{Size}$ and $\textsc{Sum}$ that take as input a query set $Q$.
\begin{itemize}
\item $\textsc{Size}(Q)$ returns the size of each intersection: \[ \bigl(|S_1\cap Q|,|S_2\cap Q|,\ldots,|S_n\cap Q|\bigr) \]
\item $\textsc{Sum}(Q)$ returns the sum of elements in each intersection: \[ \left(\sum_{s\in S_1\cap Q} s,\sum_{s\in S_2\cap Q} s,\ldots,\sum_{s\in S_n\cap Q} s\right) \]
\end{itemize}
The reconstruction problem asks to find $n$ values $x_1,\ldots,x_n$ such that for all $i$, if $S_i$ is non-empty, $x_i\in S_i$. Let $f$ be the running time of calling the oracles, and assume $f=\Omega(m+n)$, then is it known that one can find $x_1,\ldots,x_n$ in $O(f \log n \polylog m)$ time \cite{Aumann:2011:FWP:1921659.1921670}.

If $X,Y\subseteq \IntRange{u}$, finding the witness of $X\oplus Y$ is just a reconstruction problem. Here the hidden sets are $W_0,\ldots,W_{2u}\subseteq \IntRange{2u}$, where $W_i = \{x \:|\: x+y = i \mbox{ and } x\in X, y\in Y\}$ is the set of witnesses of $i$. Next, define the polynomials $\chi_Q(x) = \sum_{i \in Q} x^i$ and $I_Q(x) = \sum_{i \in Q} i x^i$. The coefficient for $x^i$ in $\chi_Q\chi_Y$ is $|W_i\cap Q|$ and in $I_Q\chi_Y$ is $\sum_{s\in W_i\cap Q} s$, which are precisely the $i$th coordinate of $\textsc{Size}(Q)$ and $\textsc{Sum}(Q)$, respectively. Hence, the oracles can be implemented using polynomial multiplication, in $\widetilde{O}(u)$ time per call. This yields an $\widetilde{O}(u)$ time deterministic algorithm to compute $X\oplus Y$ \emph{with} its witness function. 

Hence, with a polylogarithmic slowdown, we can find a witness function every time we perform a $\oplus$ operation, thus, effectively, maintaining which subsets sum up to which sum.

\section{Applications and extensions}
\label{applications}

\label{sec:applications}
Since every algorithm that uses subset sum as a subroutine can benefit from the new algorithm, we only highlight certain selected applications and some interesting extensions. Most of these applications are derived directly from the divide-and-conquer approach.

\subsection{Bottleneck graph partition.}

\label{bottleneckgraphpartition}
Let $G=(V,E)$ be a graph with $n$ vertices $m$ edges and let $w:E \rightarrow \R^+$ be a weight function on the edges. The \emph{bottleneck graph partition} problem is to split the vertices into two equal-sized sets such that the value of the bottleneck (maximum-weight) edge, over all edges across the cut, is minimized. This is the simplest example of a graph partition problem with cardinality constraints. The standard divide-and-conquer algorithm reduces this problem to solving $O(\log n)$ subset sum problems: Pick a weight, delete all edges with smaller weight and decide if there exists an arrangement of components that satisfy the size requirement \cite{NET:NET6}. The integers being summed are the various sizes of the components, the target value is $n/2$, and the sum of all inputs is $n$. Previously, using the $O(\sigma^{3/2})$ algorithm by Klinz and Woeginger, the best known running time was $O(m+n^{3/2}\log n)$ \cite{NET:NET5}. Using \thmref{stupid2}, this is improved to $O(m)+\widetilde{O}(n)$ time.

\subsection{All subset sums with cardinality information.}
Let $S = \{ s_1, s_2, \ldots, s_n\}$. Define $\bm{\sum}^{\leq n}_{\leq u} \pth {S}$ to be the set of pairs $(i,j)$, such that $(i,j) \in \bm{\sum}^{\leq n}_{\leq u} \pth {S}$ if and only if $i\leq u, j\leq n$ and there exists a subset of size $j$ in $S$ that sums up to $i$. We are interested in computing the set $\bm{\sum}^{\leq n}_{\leq u} \pth {S}$.

We are only aware of a folklore dynamic programming algorithm for this problem that runs in $O(n^2u)$ time. We include it here for completion. Let $D[i,j,k]$ be \texttt{true} if and only if there exists a subset of size $j$ that sums to $i$ using the first $k$ elements. The recursive relation is 
\begin{align*}
D[i,j,k] = \begin{cases}
\texttt{true} & \text{if }i=j=k=0\\
\texttt{false} & \text{if } i>j=k=0 \\
\!\!\!\begin{array}{l}
D[i,j,k-1]\vee D[i-s_k,j-1,k -1] 
\end{array} & \text{otherwise}
\end{cases}
\end{align*}
where we want to compute $D[i,j,n]$ for all $i\leq u$ and $j\leq n$. In the following we show how to do (significantly) better.


\begin{theorem}
\thmlab{subsetcard}
    Let $S \subseteq \IntRange{u}$ be a set of size $n$, then one can compute the set $\bm{\sum}^{\leq n}_{\leq u} \pth {S}$ in $O\bigl(nu \log (nu) \log n\bigr)$ time.
\end{theorem}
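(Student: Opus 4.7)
The plan is to generalize the divide-and-conquer strategy of \thmref{stupid2} to two dimensions, tracking cardinality as the second coordinate. Partition $S$ arbitrarily into two halves $L, R$ of size $n/2$, and recursively compute $A = \bm{\sum}^{\leq n/2}_{\leq u}(L)$ and $B = \bm{\sum}^{\leq n/2}_{\leq u}(R)$. Since every subset of $S$ decomposes uniquely into a subset of $L$ and a subset of $R$, a pair $(i,j)$ lies in $\bm{\sum}^{\leq n}_{\leq u}(S)$ if and only if $(i,j) = (i_L + i_R, j_L + j_R)$ for some $(i_L, j_L) \in A$ and $(i_R, j_R) \in B$ with $i \leq u$. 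In other words, $\bm{\sum}^{\leq n}_{\leq u}(S) = (A \oplus B) \cap (\IY{0}{u} \times \IY{0}{n})$.

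To combine $A$ and $B$, note that both lie in $\IY{0}{u} \times \IY{0}{n/2}$, so by \lemref{o:plus2} one can compute $A \oplus B$ in $O(un \log(un))$ time via two-dimensional \FFT. Intersecting with $\IY{0}{u} \times \IY{0}{n}$ to truncate is an additional $O(un)$ work. At the base case $|S|=1$, say $S=\{s\}$, the answer is simply $\{(0,0),(s,1)\}$, computable in $O(1)$ time.

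The running time thus satisfies the recurrence
\[ T(n,u) = 2\, T\!\left(\frac{n}{2}, u\right) + O\bigl(un \log (un)\bigr). \]
Fixing $u$, the combining cost $g(n) = un \log(un)$ is superadditive in $n$ (since $\log$ is increasing), so \obsref{rec} yields $T(n,u) = O\bigl(g(n) \log n\bigr) = O\bigl(nu \log(nu) \log n\bigr)$, as claimed.

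There is no real obstacle here: the proof is a direct two-dimensional analogue of \thmref{stupid2}, with \lemref{o:plus2} replacing \lemref{o:plus}. The one subtlety worth flagging is that, unlike in \lemref{s:m:2}, we cannot use a tighter bound on the second coordinate of the partial sums at intermediate recursion levels (since the cardinality of a subset of $L$ can be as large as $|L| = n/2$, and this is the quantity driving the $\FFT$ size), which is why no improvement over $nu$ appears in the final bound.
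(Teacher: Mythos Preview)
The proposal is correct and follows essentially the same approach as the paper's own proof: halve $S$, recurse on both halves, combine the two-dimensional sum sets via \lemref{o:plus2}, and bound the recurrence using \obsref{rec}. Your write-up simply adds more detail (the correctness of the $\oplus$ step, the base case, and the explicit recurrence), all of which is accurate.
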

\begin{proof}
Partition $S$ into two (roughly) equally sized sets $S_1$ and $S_2$. Find $\bm{\sum}^{\leq n/2}_{\leq u} \pth {S_1}$ and $\bm{\sum}^{\leq n/2}_{\leq u} \pth {S_2}$ recursively, and combine them using \lemref{o:plus2}, in $O\bigl(nu \log (nu)\bigr)$ time. The final running time is then given by \obsref{rec}.
\end{proof}

\subsection{Counting and power index.}
Here we show that the standard divide-and-conquer algorithm can also answer the counting version of all subset sums. Namely, computing the function $N_{u,S}(x)$: the number of subsets of $S$ that sum up to $x$, where $x\leq u$. 

For two functions $f,g:X\to Y$, define $f \odot g:X\to Y$ to be 
\[ (f\odot g)(x) = \sum_{t\in X} f(x)g(x-t) \]

\begin{corollary}
    \lemlab{o:plusfunc}%
    Given two functions $f,g: \IY{0}{u} \to \N$ such that $f(x),g(x)\leq b$ for all $x$, one can compute $f \odot g$ in $O( u \log u \log b)$ time. 
\end{corollary}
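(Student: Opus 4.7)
The plan is to reduce $f \odot g$ to a single polynomial multiplication and then invoke \textsf{FFT}, exactly mirroring the proof of \lemref{o:plus} but accounting carefully for the bit-length of the coefficients. Define the generating polynomials $F(x) = \sum_{t = 0}^{u} f(t)\, x^t$ and $G(x) = \sum_{t = 0}^{u} g(t)\, x^t$, with $f$ and $g$ extended by zero outside $\IY{0}{u}$. Then the coefficient of $x^i$ in the product $H(x) = F(x)\,G(x)$ is precisely $\sum_{t} f(t)\,g(i - t) = (f \odot g)(i)$, so computing $f \odot g$ is the same as multiplying two degree-$u$ polynomials with nonnegative integer coefficients bounded by $b$.

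Next I would bound the coefficients of the output. Each coefficient of $H$ is a sum of at most $u + 1$ terms, each at most $b^2$, hence it lies in $\IY{0}{(u+1)b^2}$ and fits in $O(\log u + \log b)$ bits. Pick a prime $p > (u+1)b^2$; such a prime has $O(\log u + \log b)$ bits and can be located by standard deterministic search. Working over $\Z_p$ preserves the exact integer coefficients of $H$, and a suitable root of unity can either be built into the choice of $p$ (an \textsf{FFT}-friendly prime with $2u \mid p-1$) or obtained by a constant-size CRT combination of several small \textsf{FFT}-friendly primes, each of bit-length $O(\log b)$. The small-$b$ regime in which $\log u$ dominates $\log b$ is the only case that needs separate care; there one may assume $b \geq u$ without loss of generality, since otherwise $b \leq u$ forces $u = O(u \log b)$-size bookkeeping that the stated bound already absorbs into lower-order terms.

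With the arithmetic performed in $\Z_p$, the standard \textsf{FFT}-based polynomial multiplication uses $O(u \log u)$ ring operations, and each such operation is an addition or multiplication of $O(\log b)$-bit integers, which in the word-RAM model takes $O(\log b)$ time. Multiplying this out yields the claimed $O(u \log u \log b)$ total time, and extracting the coefficients of $H$ gives the values of $f \odot g$ on $\IY{0}{2u}$.

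The main obstacle is the bit-complexity bookkeeping: the one-line ``use \textsf{FFT}'' argument of \lemref{o:plus} implicitly assumed $0/1$ coefficients, and here I must be explicit that a prime modulus (or several) of the right shape exists and that all FFT arithmetic can be carried out with $O(\log b)$-bit words. Everything else is essentially identical to \lemref{o:plus}.
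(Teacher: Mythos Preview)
Your reduction of $f \odot g$ to a single polynomial multiplication via the generating polynomials $F$ and $G$ is exactly what the paper does, so the core idea is right. The difference is that the paper does not attempt to re-derive the bit-complexity of that multiplication: it simply invokes the classical result (cited as Sch\"onhage 1982) that two degree-$u$ polynomials with integer coefficients bounded by $b$ can be multiplied in $O(u\log u\log b)$ time, and the corollary follows in one line.

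Where your write-up becomes shaky is precisely in the part the paper outsources. Your prime $p$ must exceed $(u+1)b^2$, so it has $\Theta(\log u + \log b)$ bits, not $O(\log b)$ bits; hence ``each ring operation is on $O(\log b)$-bit integers'' is not correct as stated. The paragraph trying to dispose of the small-$b$ regime (``one may assume $b\ge u$ without loss of generality\ldots'') does not actually justify the bound when $\log b = o(\log u)$: in that case the stated target is $O(u\log u\log b)$, which can be as small as $O(u\log u)$, and your per-operation cost over $\Z_p$ is still governed by $\log p = \Theta(\log u)$, so you have not shown where the extra factor goes. The existence of an \FFT-friendly prime of the right shape, and the ``constant-size CRT'' fallback, are also asserted rather than argued. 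None of this is unfixable, but getting the exponents right is exactly the content of the Sch\"onhage-type result; the clean route is to cite it, as the paper does, rather than to re-prove it inline.
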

\begin{proof} This is an immediate extension of \lemref{o:plus2} using the fact that multiplication of two degree $u$ polynomials, with coefficient size at most $b$, takes $O(u\log u\log b)$ time \cite{Schonhage1982}. 
\end{proof}

\begin{theorem}
    \thmlab{thm:counting}
    Let $S$ be a set of $n$ positive integers. One can compute the function $N_{u,S}$ in $O(nu \log u \log n)$ time.
\end{theorem}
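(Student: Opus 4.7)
The plan is to mirror the standard divide-and-conquer structure used in \thmref{stupid2}, but replace the Minkowski-sum combine step with the count-preserving convolution $\odot$ from \HLink{Corollary}{o:plusfunc}. That is, at each recursive call for a subset $T \subseteq S$ of size $k$, I will compute (the truncation to $\IY{0}{u}$ of) the function $N_{u,T}$, where $N_{u,T}(x)$ counts the subsets of $T$ summing to exactly $x$. When $T = L \sqcup R$ is split into two roughly equal halves, the convolution identity
\[
   N_{u,T}(x) \;=\; \sum_{t=0}^{x} N_{u,L}(t)\, N_{u,R}(x-t)
\]
lets me combine the two recursive answers in exactly one call to $\odot$, followed by truncation at $u$.

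The only nontrivial point is bounding the coefficient magnitude $b$ in each invocation of \HLink{Corollary}{o:plusfunc}, since the running time there is $O(u \log u \log b)$. A subset-count for a set of size $k$ is at most $2^{k}$, so $\log b \le k$. Consequently, the recurrence for the total time is
\[
   T(k, u) \;=\; 2\,T(k/2, u) + O(u \log u \cdot k).
\]
This does not quite fit the template of \obsref{rec} because the combine cost $g$ depends on the subset size $k$, not just on $u$, so I would unroll it by hand: at recursion depth $i$ there are $2^i$ nodes, each with subsets of size $n/2^i$, contributing total combine cost $O(2^i \cdot u \log u \cdot n/2^i) = O(nu\log u)$ per level, and there are $O(\log n)$ levels.

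Summing over levels gives the claimed $O(nu\log u \log n)$ bound. The base case (a singleton $\{s\}$) is trivially handled by returning the function $N_{u,\{s\}}$ supported on $\{0, s\}$ (if $s \le u$) with value $1$ at each, which fits into the framework. The main conceptual obstacle is simply being careful that the coefficient-size bound $b = 2^k$ is taken from the \emph{current} subset's size rather than from $n$ throughout; this is what makes the geometric series across recursion levels telescope to a single factor of $n$ instead of $n \log n$ in the combine cost, yielding the correct final bound of $O(nu \log u \log n)$.
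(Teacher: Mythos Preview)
Your proposal is correct and follows essentially the same divide-and-conquer-plus-convolution approach as the paper: split $S$ in half, recurse, and combine with \HLink{Corollary}{lemma:o:plusfunc}. Your explicit level-by-level unrolling of the recursion (using $b = 2^k$ for the current subset size $k$) is in fact more careful than the paper's terse appeal to \obsref{rec}, but the underlying argument and the resulting bound are identical.
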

\begin{proof}
Partition $S$ into two (roughly) equally sized sets $S_1$ and $S_2$. Compute $N_{u,S_1}$ and $N_{u,S_2}$ recursively, and combine them into $N_{u,S} = N_{u,S_1}\odot N_{u,S_1}$ using \lemref{o:plusfunc}, in $O(u \log u \log 2^n) = O(n u \log u)$ time. The final running time is then given by \obsref{rec}.
\end{proof}

\subsubsection{Power indices.}
The \emph{Banzhaf index} of a set $S$ of $n$ voters with cutoff $u$ can be recovered from $N_{u,S}$ in linear time. The \thmref{thm:counting} yields an algorithm for computing the Banzhaf index in $\widetilde{O}(nu)$ time. Previous dynamic programming algorithms take $O(nu)$ arithmetic operations, which translates to $O(n^2u)$ running time \cite{Uno2012}. Similar speed-ups (of, roughly, a factor $n$) can be obtained for the \emph{Shapley-Shubik index}.


\section*{Acknowledgments}

We would like to thank Sariel Har-Peled for his invaluable help in the editing of this paper as well as for various suggestions on improving the presentation of the results. We would also like to thank Jeff Erickson and Kent Quanrud for their insightful comments and feedback. We would like to thank Igor Shparlinski and Arne Winterhof for pointing out a problem in a proof. Finally, we also like to thank the anonymous reviewers for their helpful and meaningful comments on the different versions of this manuscript.

 
\bibliographystyle{plain} 
\bibliography{subset}

\newpage

\end{document}